\newtheorem{theorem}{Theorem}[section]
\newtheorem{lemma}[theorem]{Lemma}
\newtheorem{corollary}[theorem]{Corollary}
\newtheorem{definition}{Definition}[section]
\theoremstyle{remark}
\newtheorem{remark}{Remark}[section]
\newtheorem{example}{Example}[section]
\newcommand{\noindentparagraph}[1]{\noindent\textbf{#1.}}
\newcommand\class[1]{$\mathsf{#1}$}
\newcommand{\sentence}{\Psi}
\newcommand{\otc}{\bm{\zeta}} 
\newcommand{\otce}{\zeta} 
\newcommand{\segc}{\bm{\rho}} 
\newcommand{\segce}{\rho} 
\newcommand{\segment}[2]{{#1}\rightsquigarrow{#2}}
\newcommand{\pred}[1]{\mathsf{pred}\left( #1 \right)}
\newcommand{\fomodels}[2]{\mathcal{M}_{#1, #2}}
\newcommand{\weight}{w}
\newcommand{\negweight}{\overline{w}}
\newcommand\wfomc{\mathsf{WFOMC}}
\newcommand\fomc{\mathsf{FOMC}}
\newcommand{\loaxiom}{\mathcal{L}}
\newcommand{\succaxiom}{\mathcal{S}}
\newcommand{\gridaxiom}{\mathcal{G}}
\newcommand{\acyclicityaxiom}{\mathcal{A}}
\newcommand\FOtwo{$\text{FO}^2$}
\newcommand\FOthree{$\text{FO}^3$}
\newcommand\Ctwo{$\text{C}^2$}
\newcommand\tiling[1]{\mathsf{11NMCT}_{(#1)}}
\title{Weighted First Order Model Counting for Two-variable Logic with Axioms on Two Relations\thanks{Authors appear in strict alphabetical order.}}
\author[1]{Qipeng Kuang}\affil{The University of Hong Kong, Hong Kong, China}
\author[2]{V\'aclav K\r{u}la}\affil{Czech Technical University in Prague, Prague, Czech Republic}
\author[2]{Ond\v{r}ej Ku\v{z}elka}
\author[3]{Yuanhong Wang}\affil{Jilin University, Changchun, China}
\author[4]{Yuyi Wang}\affil{CRRC Zhuzhou Institute, Zhuzhou, China}
\date{}
\begin{document}

\maketitle


\begin{abstract}
The Weighted First-Order Model Counting Problem (WFOMC) asks to compute the weighted sum of models of a given first-order logic sentence over a given domain. The boundary between fragments for which WFOMC can be computed in polynomial time relative to the domain size lies between the two-variable fragment (\FOtwo{}) and the three-variable fragment (\FOthree{}). It is known that WFOMC for \FOthree{} is \class{\#P_1}-hard while polynomial-time algorithms exist for computing WFOMC for \FOtwo{} and \Ctwo{}, possibly extended by certain axioms such as the linear order axiom, the acyclicity axiom, and the connectedness axiom. All existing research has concentrated on extending the fragment with axioms on a single distinguished relation, leaving a gap in understanding the complexity boundary of axioms on multiple relations. In this study, we explore the extension of the two-variable fragment by axioms on two relations, presenting both negative and positive results. We show that WFOMC for \FOtwo{} with two linear order relations and \FOtwo{} with two acyclic relations are \class{\#P_1}-hard. Conversely, we provide an algorithm in time polynomial in the domain size for WFOMC of \Ctwo{} with a linear order relation, its successor relation and another successor relation.
\end{abstract}


\section{Introduction}

We consider the Weighted First-Order Model Counting Problem (WFOMC) which asks to compute the weighted sum of models for a given first-order logic sentence over a specified domain, alongside a pair of weighting functions that assign a weight to each model of the sentence. WFOMC serves as a fundamental problem in Statistical Relational Learning \cite{getoor2007introduction} where applications such as Markov Logic Networks \cite{WFOMC-UFO2}, parfactor graphs \cite{poole2003first}, probabilistic logic programs \cite{WFOMC-FO2} and probabilistic databases \cite{VandenBroeck13} can be reduced to WFOMC. Recent work also reveals the potential of WFOMC to contribute to enumerative combinatorics by providing a general framework for encoding counting problems, integer sequences, and computing graph polynomials on specific graphs \cite{fluffy,WFOMC-polys}.

However, it is unlikely to find an algorithm computing WFOMC in time polynomial in the size of the sentence due to the fact that WFOMC encodes the well-known \class{\#P}-complete problem \#SAT even if the symbols of relations are fixed \cite{WFOMC-FO3}. Consequently, most study concentrates on the complexity in terms of the domain size, which is an analog to the concept of data complexity in database theory \cite{datacomplexity}. The fragments enjoying polynomial time complexity in the domain size are called \emph{domain-liftable} \cite{WFOMC-UFO2}.

It is therefore interesting to investigate the boundary of domain-liftability for WFOMC within the fragments of first-order logic. Previous results have shown that the two-variable fragment (\FOtwo{}) is domain-liftable \cite{WFOMC-UFO2,WFOMC-FO2}, while the three-variable fragment (\FOthree) is not domain-liftable unless \class{\#P_1 \subseteq FP} \cite{WFOMC-FO3}. Here, \class{\#P_1} defined by \cite{counting-complexity} is the class of counting problems whose input is in unary. The work in \cite{WFOMC-C2} extends the domain-liftability of \FOtwo{} by incorporating cardinality constraints and counting quantifiers, known as \Ctwo{}.

To further advance domain-liftability, recent studies have focused on augmenting \Ctwo{} with \emph{additional axioms} which may not necessarily be finitely expressible in first-order logic. Starting from the result \cite{DBLP:conf/lics/KuusistoL18} showing that \FOtwo{} remains domain-liftable with the addition of a functionality axiom, subsequent results have identified various tractable axioms such as the tree axiom \cite{WFOMC-tree-axioms}, the linear order axiom \cite{WFOMC-linearorder-axiom}, the connectedness axiom \cite{WFOMC-axioms} and the acyclicity axiom \cite{WFOMC-axioms}. The work in \cite{WFOMC-polys} proposes a general approach to prove domain-liftability of axioms expressible by graph polynomials (e.g., the bipartite axiom and the strong connectedness axiom) and even their combinations (e.g., the combination of the acyclicity and the weak connectedness axioms on a single relation).

All the existing results, however, only allow axioms on a single distinguished relation in the sentence. The techniques involved in the above work such as the Matrix-Tree Theorem for the tree axiom \cite{WFOMC-tree-axioms}, the recursion formulae for the connectedness axiom and the acyclicity axiom \cite{WFOMC-axioms}, and polynomials for WFOMC for axioms in \cite{WFOMC-polys} cannot be generalized to axioms on multiple relations trivially. The extended constraints on the first-order sentence involving \emph{multiple relations} such as ``two binary relations are interpreted as linear orders'' have been barely scratched.

In this work, we augment \FOtwo{} and \Ctwo{} with axioms on two distinguished binary relations and examine the boundary of domain-liftability for WFOMC. The axioms we consider include the linear order axiom, the acyclicity axiom, and the successor axiom. The linear order axiom, expressible in the first-order fragments with at least three variables, enables an ordering of domain elements using two variables. The acyclicity axiom requires the relation to represent a directed acyclic graph, therefore it can encode the linear order axiom since a linear order relation can be viewed as an acyclic tournament with self-loops. The successor axiom is a weakening of the linear order axiom by allowing access only to the successor relation of the linear order.

A closely related line of research is the exploration of decidability in the finite satisfiability problem for first-order logics, whose frontier also falls between \FOtwo{} and \FOthree{} due to the decidability of \FOtwo{} \cite{fs-FO2} and the undecidability of \FOthree{} \cite{fs-FO3}. It has been proved that \FOtwo{} with 8 linear order axioms \cite{fs-FO2+8order}, \FOtwo{} with 3 linear order axioms \cite{fs-FO2+lo}, \FOtwo{} with 2 linear order axioms and their successor axioms \cite{fs-FO2+2successor} and \Ctwo{} with 2 linear order axioms \cite{fs-C2+lo} are all undecidable. The proofs of undecidability generally follow this approach: First design a sentence in the fragment so that each model is grid-like (i.e., a grid can be homomorphically embedded in the model), and then use the grid to encode an undecidable tiling problem. In this sense, deciding whether the fragment has a finite model is equivalent to solving the tiling problem.

At first glance, it may seem that the undecidability of a fragment immediately implies impossibility of its domain-liftability for WFOMC, as we can follow the same route that we identify a counting variant of the tiling problem which is hard to compute and then encode the tiling problem using the same techniques applied in undecidability. However, such techniques of encoding a grid using the axioms are not applicable in model counting. In fact, the grid-like models do \emph{not} ensure that the elements form a real grid. Although the existence of a finite model is equivalent to the existence of a valid tiling, which is sufficient for decision problems, there is no clear quantitative relation between the number of grids and the number of models of the sentence, thus computing FOMC or WFOMC for such sentences does not aid in computing the number of tilings. Therefore, a more refined encoding is necessary for model counting.

\subsection{Our Contributions}

We consider WFOMC for \FOtwo{} with linear order axioms and successor axioms on two distinguished binary relations, providing both negative and positive results.

For the negative results, we prove that WFOMC for \FOtwo{} with two linear order relations is \class{\#P_1}-hard. We obtain the hardness in three steps. First, we construct a variant of the tiling problem, 1-1-N-M counting tiling problem, and show that it is \class{\#P_1}-hard to compute. Next, we introduce an intermediate axiom, the grid axiom, and show that WFOMC for \FOtwo{} with a grid axiom is also \class{\#P_1}-hard. Finally, we reduce WFOMC for \FOtwo{} with a grid axiom to WFOMC for \FOtwo{} with two linear order relations and obtain its \class{\#P_1}-hardness. As a linear order axiom can be encoded by an acyclicity axiom, we also conclude that WFOMC for \FOtwo{} with two acyclic relations is \class{\#P_1}-hard.

For the positive results, we give an algorithm computing WFOMC for \FOtwo{} with a linear order relation and a successor relation (of another linear order) in time polynomial in the domain size. There are two implications of this result. First, combining the reduction from WFOMC for \Ctwo{} with cardinality constraints to WFOMC for \FOtwo{} given by \cite{WFOMC-C2}, it is implied that \Ctwo{} with a linear order relation and another successor relation (possibly with cardinality constraints) is domain-liftable for WFOMC as well. Furthermore, the technique of obtaining the successor relation of a linear order proposed by \cite{WFOMC-linearorder-axiom} implies the domain-liftability of \Ctwo{} with a linear order relation, its successor relation and another successor relation (possibly with cardinality constraints).

Our results indicate the boundary of domain-liftability in WFOMC for first-order sentences with linear order axioms in several aspects. From the perspective of the quantity of linear order axioms, our negative result is a complement of the domain-liftability of WFOMC for \FOtwo{} with one linear order relation shown by \cite{WFOMC-linearorder-axiom}. From the perspective of the power of linear order axioms, our results show that the intractability of two linear order axioms arises from the extra information beyond the successor relation of any of the linear orders.

\subsection{Related Work}

The first step in our negative result involves the trick of encoding a Turing machine using a tiling problem. For details of the unbounded tiling problem to which the halting problem can be reduced, one can refer to \cite{tiling-NPC,wang-tile}. Additionally, \cite{tiling-NPC,tiling-bounded} discuss the bounded tiling problem encoding a nondeterministic Turing machine.

The second and third parts of our negative result involve the trick of encoding a tiling problem using a two-variable first-order sentence with multiple axioms that restrict the structures to be grid-like. This trick is commonly used in finite satisfiability \cite{fs-FO2+8order,fs-FO2+lo,fs-FO2+2order,fs-FO2+3eq,fs-FO2+2eq}. However, the trick for decision problems does not guarantee any quantitative relation between the number of valid tilings and the number of models of the sentence.

Our positive result is based on the dynamic programming algorithm computing WFOMC for \FOtwo{} with a single linear order axiom \cite{WFOMC-linearorder-axiom}. Another work \cite{WFOMC-polys} proves the domain-liftability of this fragment as well by defining polynomials for WFOMC, though, their polynomials cannot be easily generalized to accommodate axioms on two relations while being computed in time polynomial in the domain size. 

\section{Preliminaries}



Throughout this paper, we denote the set of natural numbers from 1 to $n$ as $[n]$. All vectors used in this paper are non-negative integer vectors. For a vector $\bm{v}$ of length $L$, $|\bm{v}|$ denotes the sum of its elements, i.e., $\sum_{i=1}^L v_i$. By $\bm{v} \ge 0$ we require that every coordinate of $\bm v$ is non-negative.

\subsection{First-Order Logic with Axioms}\label{sec:fol}

We consider the function-free and finite domain fragment of first-order logic. An \emph{atom} of arity $k$ takes the form $P(x_1, \cdots, x_k)$ where $P/k$ is from a vocabulary of \emph{relations} (also called \emph{predicates}), and $x_1, \cdots, x_k$ are logical variables from a vocabulary of variables or constants from a vocabulary of constants.\footnote{We restrict $k \ge 1$ for simplicity but $k \ge 0$ is tractable for this work as well.} A \emph{literal} is an atom or its negation. A \emph{formula} is a literal or formed by connecting one or more formulas together using negation, conjunction, or disjunction. A formula may be optionally surrounded by one or more quantifiers of the form $\forall x$ or $\exists x$, where $x$ is a logical variable. A logical variable in a formula is \emph{free} if it is not bound by any quantifier. A formula with no free variables is called a \emph{sentence}.

A first-order logic formula in which all variables are substituted by constants in the domain is called \emph{ground}. A \emph{possible world} $\omega$ interprets each relation in a sentence over a finite domain, represented by a set of ground literals. The possible world $\omega$ is a \emph{model} of a sentence $\sentence$ if $\omega \models \sentence$. Here, $\omega \models \alpha$ means that the formula $\alpha$ is true in $\omega$. We denote the set of all models of a sentence $\sentence$ over the domain $\{1,2,\dots,n\}$ by $\fomodels{\sentence}{n}$.

In this paper, we are specially interested in the following fragments of first-order sentences. A sentence with at most two logical variables is called an \FOtwo{} sentence. An \FOtwo{} sentence with \emph{counting quantifiers} $\exists_{=k}$, $\exists_{\le k}$ and $\exists_{\ge k}$ is called a \Ctwo{} sentence where $\exists_{=k}$ restricts that the number of assignments of the quantified variable satisfying the subsequent formula is exactly $k$ and the other two counting quantifiers are defined similarly. An \FOtwo{} or \Ctwo{} sentence can be possibly augmented with \emph{cardinality constraints}, which are expressions of the form of $|P| \bowtie k$ where $P$ is a relation and $\bowtie$ is a comparison operator $\{<, \le, =, \ge, >\}$. These constraints are imposed on the number of distinct positive ground literals of $P$ in a model.

The interpretation of a binary relation $R$ can be regarded as a directed graph $G(R)$ where the domain is the vertex set and the true ground literals of $R$ are the edges. An \emph{axiom} is a special constraint on a binary relation $R$ that $G(R)$ should be a certain combinatorial structure. In this work, we consider the following axioms:

\begin{definition}[Linear Order Axiom \cite{WFOMC-linearorder-axiom}]
The \emph{linear order axiom}, denoted by $\loaxiom(R)$, requires that $R$ should be a linear order of elements, i.e., $R$ is a reflexive, anti-symmetric, transitive and total relation. In other words, $G(R)$ is an acyclic tournament with a self-loop for each vertex.
\end{definition}

\begin{definition}[Successor Axiom]\label{def:succ}
The \emph{successor axiom}, denoted by $\succaxiom(R)$, requires that $R$ should be the successor relation of a linear order. That is, let $L$ be a linear order relation, then $R$ satisfies
\begin{equation*}
  \forall x \forall y \Bigl(R(x,y) \leftrightarrow \bigl(L(x,y) \land \lnot \exists z (L(x,z) \land L(z,y))\bigl)\Bigl).
\end{equation*}

In other words, $G(R)$ is a directed path.
\end{definition}

\begin{definition}[Acyclicity Axiom \cite{WFOMC-polys}]
The \emph{acyclicity axiom}, denoted by $\acyclicityaxiom(R)$, requires that $G(R)$ is acyclic.
\end{definition}

We abbreviate ``a linear order axiom on a distinguished binary relation'' as ``a linear order relation'' for convenience, and deal with the successor axiom and the acyclicity axiom similarly.

\subsection{Unweighted and Weighted First Order Model Counting}\label{sec:wfomc}

The \textit{first-order model counting problem} (FOMC) asks to compute the number of models of a first-order sentence over the domain of a given size.

\begin{definition}[First Order Model Counting]
The FOMC of a first-order sentence $\sentence$ over a finite domain of size $n$ is
\begin{equation*}
  \fomc(\sentence, n) = |\fomodels{\sentence}{n}|.
\end{equation*}
\end{definition}

The \textit{weighted first-order model counting problem} (WFOMC) additionally expects a pair of weighting functions $(\weight, \negweight)$ that both map all relations in $\sentence$ to real weights.
Given a set $L$ of literals whose relations are in $\sentence$, the weight of $L$ is defined as
\begin{equation*}
W(L, \weight, \negweight):= \prod_{l \in L_T}\weight(\pred{l}) \cdot \prod_{l \in L_F}\negweight(\pred{l}),
\end{equation*}
where $L_T$ (resp. $L_F$) denotes the set of positive (resp. negative) literals in $L$, and $\pred{l}$ maps a literal $l$ to its corresponding relation name. We omit the symbols $\weight, \negweight$ and write $W(L)$ in short when the weighting functions are clear in the text.

\begin{example}\label{exmp:model-weight}
Consider the sentence $\sentence = \forall x \forall y (S(x) \to R(x,y))$ and the weighting functions $\weight(S) = 3,\ \weight(R) = 2,\ \negweight(S) = \negweight(R) = 1$.
The weight of the literal set
$$
L = \{S(1), \lnot S(2), R(1,1), R(1,2), R(2,1), \lnot R(2,2)\}
$$
is
\begin{equation*}
  \weight(S) \cdot \left(w(R)\right)^3 \cdot \negweight(S) \cdot \negweight(R) = 24.
\end{equation*}
\end{example}

\begin{definition}[Weighted First Order Model Counting]
    The WFOMC of a first-order sentence $\sentence$ over a finite domain of size $n$ under weighting functions $\weight, \negweight$ is
    \begin{equation*}
        \wfomc(\sentence, n, \weight, \negweight) := \sum_{\mu \in \fomodels{\sentence}{n}} W(\mu, \weight, \negweight).
    \end{equation*}
\end{definition}

As the weighting functions are defined in terms of relations, all positive ground literals of the same relation get the same weights, and so do all negative ground literals of the same relation.
Therefore, the WFOMC we consider is also referred to as \emph{symmetric} WFOMC~\cite{WFOMC-FO3}.

\begin{example}
Consider the sentence $\sentence$ and the weighting functions in \Cref{exmp:model-weight}. Then
\begin{equation*}
\wfomc(\sentence, n, \weight, \negweight) = (3 \cdot 2^n + 3^n)^n.
\end{equation*}

In fact, for each domain element $i \in [n]$, either $S(i)$ is true and $R(i,j)$ is true for all $j \in [n]$ which contributes weight $3 \cdot 2^n$, or $S(i)$ is false and $R(i,j)$ is not limited which contributes weight $(1+2)^n$. Multiplying the contributed weight of each element, we get the above value.
\end{example}

It is worth noting that FOMC is a special case of WFOMC where the weighting functions simply map all relations to $1$.

\subsection{Complexity}

In this work, we consider the complexity for WFOMC with respect to the domain size. That is to say, when measuring the complexity for WFOMC, the problem can be regarded as fixing the sentence and the weighing functions, only inputting $n$ in unary. A sentence, or class of sentences, is said to be \emph{domain-liftable}~\cite{WFOMC-UFO2} for WFOMC if for any pair of weighting functions its WFOMC can be computed in time polynomial in the domain size.

A complexity class relevant for such counting problems with input in unary is the class \class{\#P_1} defined in \cite{counting-complexity}. A function $f: \{1\}^* \to \mathbb N$ is in \class{\#P_1} iff there is a polynomial time nondeterministic Turing machine whose input alphabet is unary such that $f(x)$ equals the number of accepting paths of the Turing machine on the input $x$. A problem is \class{\#P_1}-hard if all \class{\#P_1} functions can be computed in polynomial time with the oracle for this problem. Specifically, WFOMC of a sentence or a class of sentences is \class{\#P_1}-hard if all \class{\#P_1} functions can be computed in polynomial time with the oracle for WFOMC of the sentence(s) with any pair of weighting functions. The following corollary serves as evidence showing that \class{\#P_1} contains functions that are hard to compute.

\begin{corollary}{(Corollary from \cite[Theorem 1]{tally})}
If \class{\#P_1 \subseteq FP}, then \class{E=NE}.
\end{corollary}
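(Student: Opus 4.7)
The plan is to deduce the statement by combining the hypothesis with the classical padding equivalence between \class{NE} and the tally (unary-input) fragment of \class{NP}, which is the content of Theorem~1 of the cited tally-language paper.

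The first step is to observe that \class{\#P_1 \subseteq FP} implies that every tally language in \class{NP} also lies in \class{P}. Indeed, for a nondeterministic polynomial-time Turing machine $M$ with unary input alphabet, the function $f(1^n)$ counting accepting paths of $M$ on input $1^n$ is in \class{\#P_1} by definition, hence computable in polynomial time by the hypothesis; since $1^n$ is accepted iff $f(1^n) > 0$, membership is decidable in polynomial time as well. This collapses tally \class{NP} into tally \class{P}.

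The second step invokes the padding argument of Book: given a language $L$ in \class{NE}, decided by a nondeterministic machine in time $2^{cn}$, form the tally language $T_L = \{1^{\mathsf{bin}(x)} : x \in L\}$, where $\mathsf{bin}(x)$ is the integer whose binary representation (with a leading $1$) equals $x$. Then $N = \mathsf{bin}(x) = 2^{\Theta(|x|)}$, and simulating the original \class{NE} machine on $x$ costs time $2^{c|x|} = N^{O(1)}$, polynomial in $N$. So $T_L$ is a tally language in \class{NP}, and by the first step $T_L$ lies in \class{P}.

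Finally, to decide membership $x \in L$ on an input of length $n$, we compute $N = \mathsf{bin}(x)$ and test $1^N \in T_L$ in time polynomial in $N = 2^{O(n)}$, giving a $2^{O(n)}$-time decider and thus placing $L$ in \class{E}. No real technical obstacle arises beyond bookkeeping on the padding sizes to ensure ``polynomial in unary input length'' on the padded side translates to $2^{O(n)}$ on the original binary-input side; the entire content of the corollary is the syllogism that \class{\#P_1 \subseteq FP} collapses tally \class{NP} to tally \class{P}, which by the cited theorem of Book forces \class{E = NE}.
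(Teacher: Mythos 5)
Your proposal is correct and follows essentially the intended route: the paper states this corollary without proof, simply as a consequence of the cited tally-language theorem, and your chain (\class{\#P_1 \subseteq FP} collapses tally \class{NP} into \class{P} via counting accepting paths, then the padding equivalence forces \class{E = NE}) is exactly that argument, with the padding direction spelled out explicitly.
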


The previous work \cite{WFOMC-FO3} showed that there is a universal Turing machine that can simulate all Turing machines representing \class{\#P_1} problems.

\begin{lemma}{(Universal Turing Machine for \class{\#P_1} \cite{WFOMC-FO3})}\label{lemma:utm}
There is a multi-tape linear time nondeterministic Turing machine $M$ such that the input alphabet of $M$ is unary and computing its number of accepting paths on a given input is \class{\#P_1}-hard.
\end{lemma}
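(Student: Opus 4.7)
The plan is to exhibit a fixed multi-tape NTM $M$ together with, for every \class{\#P_1} NTM $N$, a polynomial-time padding map $g_N \colon 1^* \to 1^*$ such that the number of accepting paths of $M$ on input $g_N(1^n)$ equals the number of accepting paths of $N$ on input $1^n$. This yields the polynomial-time Turing reduction required to establish \class{\#P_1}-hardness of counting $M$'s accepting paths.

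First, I fix a standard encoding $\langle \cdot \rangle$ of multi-tape NTMs as natural numbers, together with a pairing $\pi(c, n) = m$ that is computable and invertible in time linear in $m$ and that maps distinct pairs $(c, n)$ to distinct values of $m$. The universal machine $M$ operates as follows on a unary input $1^m$: attempt to decode $m$ as $\pi(\langle N \rangle, n)$ for some NTM $N$ and input length $n$; if the decoding fails or is ambiguous, deterministically reject; otherwise, nondeterministically simulate $N$ on $1^n$ using $M$'s work tapes and accept iff the simulation accepts. Because $\pi$ is injective, the nondeterministic branches of $M$ on $1^m$ are in bijection with the branches of $N$ on $1^n$, so the counts of accepting paths agree exactly.

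The padding is then $g_N(n) = \pi(\langle N \rangle, n)$, which is computable in time polynomial in $n$. The crucial freedom is that $\pi$ may be inflated so that $g_N(n)$ grows as fast as we like (within polynomial bounds), and I will arrange $g_N(n) \ge p_N(n) \cdot \kappa_N$, where $p_N$ bounds the runtime of $N$ and $\kappa_N$ is the per-step overhead of the universal simulation on $M$. With enough slack in the input length, the entire decoding and simulation fits into time linear in $m$, meeting the linear-time requirement on $M$, while $g_N$ itself remains polynomial-time computable as required by the reduction.

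The main obstacle is achieving genuinely \emph{linear} time universal simulation by a single fixed multi-tape NTM, since classical universal simulations incur both a per-step factor depending on $\langle N \rangle$ and logarithmic bookkeeping overhead. I would address this by keeping $\langle N \rangle$ on a dedicated read-only work tape and the contents of $N$'s tapes on further work tapes with constant-time local head access, so that each step of $N$ costs only $O(\kappa_N)$ time on $M$, and then letting the padding absorb the constant $\kappa_N$ into the input length. A secondary subtlety is ensuring strict injectivity of $\pi$, so that no spurious accepting paths are introduced by ambiguous decodings; this is handled by prefixing $\langle N \rangle$ with a self-delimiting length tag before pairing with $n$. Once these engineering points are settled, the linearity of $M$, polynomial-time computability of $g_N$, and the path-preserving bijection combine to yield the reduction and the claimed \class{\#P_1}-hardness.
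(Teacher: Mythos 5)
The paper does not actually prove this lemma itself; it imports it from \cite{WFOMC-FO3}, and your proposal reconstructs the standard argument behind that citation: a fixed universal NTM over a unary alphabet plus a per-machine unary padding map $g_N$ so that one oracle call to the counting problem for $M$ recovers the accepting-path count of any \class{\#P_1} machine $N$. The plan is sound, but two points, as written, are genuine gaps rather than mere engineering.

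First, linearity of $M$ must hold on \emph{every} input, not only on the well-padded images of $g_N$. As described, on an input $1^m$ that decodes to $(\langle N\rangle,n)$ with $m$ small compared to $p_N(n)$, your $M$ runs the simulation to completion and is therefore not a linear-time machine. You need an explicit clock: $M$ counts its own steps and halts (rejecting) after $c\cdot m$ of them; on the padded inputs produced by the reduction the clock never fires, so the accepting-path counts are unaffected. Second, the claim that the fixed pairing $\pi$ ``may be inflated so that $g_N(n)\ge p_N(n)\cdot\kappa_N$'' does not work as stated: once $\pi(c,n)$ is fixed, its growth in $n$ is one fixed function, while the bounds $p_N$ range over polynomials of unbounded degree, so no fixed two-argument pairing that is polynomial in $n$ can dominate them all. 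The standard fix is to let the reduction supply the slack explicitly, e.g.\ encode a triple $(\langle N\rangle,n,k)$ with $k$ chosen by the reduction to exceed the total simulation cost, or use an encoding such as $m=(n+2)^{\langle N\rangle}$ whose degree in $n$ grows with the (paddable) machine code. Relatedly, your concern about a genuinely linear-time universal simulation, and the claim of constant per-step overhead $O(\kappa_N)$, are both beside the point and the latter is not quite true when $N$ has more tapes than the fixed machine $M$ (interleaving $N$'s tapes as tracks forces head shuttling, giving at least a Hennie--Stearns-type overhead); but this is harmless, because once $M$ is clocked, any polynomial simulation overhead is absorbed by taking the padding larger, which is exactly the freedom the unary input and the polynomial-time reduction give you.
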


\begin{remark}\label{remark:wfomc-membership}
FOMC of a fixed first-order logic sentence is a \class{\#P_1} problem. In fact, a nondeterministic Turing machine can guess a model of the sentence and then verify it in polynomial time of the domain size. Therefore, the number of accepting paths of this Turing machine equals the number of models of the sentence. However, WFOMC of a fixed first-order logic sentence and fixed weighting functions is not necessarily in \class{\#P_1} since the number of accepting paths of any Turing machine must be a natural number but the weights in WFOMC can be negative or even non-integers.
\end{remark}

Let $\sentence, \sentence'$ be two first-order logic sentences possibly with cardinality constraints and axioms, and $n$ be the domain size. We say that WFOMC of $(\sentence, n)$ can be \emph{reduced} to WFOMC of $(\sentence', n)$ if for any pair of weighting functions $\weight, \negweight$, $\wfomc(\sentence, n, \weight, \negweight)$ can be computed in time polynomial in $n$ with the oracle for $\wfomc(\sentence', n, \weight', \negweight')$ for any pair of weighting functions $\weight',\negweight'$.

Previous work \cite{WFOMC-C2} has shown that WFOMC of \Ctwo{} with cardinality constraints can be computed by WFOMC of \FOtwo{}, regardless of the axioms involved in the sentence.

\begin{lemma}{(Eliminating Counting Quantifiers and Cardinality Constraints \cite[Proposition 5 and Theorem 4]{WFOMC-C2})}\label{lemma:c2+cc}
For any \Ctwo{} sentence $\sentence$ (possibly with cardinality constraints) and the conjunction of axioms $A$, there is an \FOtwo{} sentence $\sentence'$ without counting quantifiers or cardinality constraints such that for any domain size $n$, WFOMC of $(\sentence \land A, n)$ can be reduced to WFOMC of $(\sentence' \land A, n)$.
\end{lemma}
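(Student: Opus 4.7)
The plan is a two-stage reduction: first eliminate cardinality constraints, then eliminate counting quantifiers. The key invariant is that every transformation introduces only fresh auxiliary predicates together with \FOtwo{} formulas relating them to the original signature, so the axioms $A$, which constrain only the original distinguished relations, are preserved verbatim in every step.

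For a cardinality constraint $|P| = k$, I reparameterize the weight by $\weight'(P) := z \cdot \weight(P)$ for a symbolic variable $z$. Then $\wfomc(\sentence \land A, n, \weight', \negweight)$ is a polynomial in $z$ of degree at most $n^{\text{ar}(P)}$, and its coefficient of $z^k$ equals the WFOMC restricted to models in which $P$ has exactly $k$ true ground atoms. Polynomially many evaluations at distinct values of $z$ combined with Lagrange interpolation recover this coefficient, reducing the constraint away; comparisons other than $=$ reduce to disjoint sums over admissible $k$, and multiple constraints are eliminated one at a time. For each counting quantifier $\exists_{=k} y.\, \phi(x,y)$ (with $\exists_{\le k}$ and $\exists_{\ge k}$ handled by summing equality cases), I introduce a fresh binary predicate $R_\phi$ defined by the \FOtwo{} axiom $\forall x \forall y.\,(R_\phi(x,y) \leftrightarrow \phi(x,y))$, turning the counting quantifier into a per-element constraint on the row sums of $R_\phi$. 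Following the construction in the cited \Ctwo{}-WFOMC paper, this per-$x$ constraint is encoded through fresh unary ``tag'' predicates and further parametric weights; the \FOtwo{} WFOMC computation's natural decomposition along 1-types of elements lets these tags control per-type row sums, and another round of polynomial interpolation (followed by a second application of stage one on the fresh predicates) extracts the element-wise row-sum-$k$ contribution.

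The main obstacle is that counting quantifiers impose per-element (fiber) constraints while cardinality constraints only control global totals; bridging this gap is the core content of the cited construction. The bridge is the 1-type decomposition of \FOtwo{} WFOMC, which converts per-fiber requirements into per-type requirements that are realizable by fresh unary predicates with carefully tuned weights. Because every step introduces only fresh predicates and universal \FOtwo{} formulas relating them to the original signature, the conjunction with $A$ is unaffected, yielding the claimed reduction of $(\sentence \land A, n)$ to $(\sentence' \land A, n)$ for every domain size $n$ and every pair of weighting functions.
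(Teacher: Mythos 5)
The paper does not prove this lemma itself; it imports it from the cited work, and the only point it implicitly relies on is that the transformation never touches the distinguished relations, so the axioms $A$ are preserved. Your first stage (cardinality constraints) is exactly the cited technique: scale the weight of the constrained relation $P$ by a fresh parameter $z$, note that $\wfomc$ becomes a polynomial in $z$ of degree at most $n^{\mathrm{ar}(P)}$ whose coefficient of $z^k$ collects precisely the models with $k$ true ground atoms of $P$, and recover that coefficient by interpolation from polynomially many oracle calls; since this only modifies weights, $A$ rides along for free, and you correctly emphasize this invariant.

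The gap is in your second stage. After rewriting $\exists_{=k}\, y\,\phi(x,y)$ as a row-sum condition on a fresh relation $R_\phi$, you propose to enforce the per-element condition with unary ``tag'' predicates, tuned weights, and another round of interpolation, invoking ``the \FOtwo{} WFOMC computation's natural decomposition along 1-types''. This does not work as stated, for two reasons. First, the reduction only has black-box oracle access to $\wfomc(\sentence' \land A, n, \cdot, \cdot)$; it cannot exploit the internal 1-type decomposition of any particular algorithm, and in the presence of the axioms $A$ the oracle need not be computed by that algorithm at all. Second, symmetric weights and interpolation discriminate models only by \emph{global} counts of true ground atoms of a relation; they cannot separate a model in which every element has row sum $k$ from one in which row sums vary but have the same total, and a unary tag asserting ``my row sum is $k$'' cannot be made truthful without the very per-element counting you are trying to eliminate. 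The actual bridge in the cited construction is different: starting from the normal form in which each counting quantifier appears as a conjunct $\forall x\, \exists_{\bowtie k}\, y\, \phi(x,y)$, one expresses the lower bound ``at least $k$ per element'' purely in \FOtwo{} by introducing $k$ pairwise disjoint fresh binary relations contained in $R_\phi$, each required to supply a witness for every $x$ (the existential quantifiers being removed by the standard Skolemization trick with negative weights), and then forces ``exactly $k$'' for \emph{every} element simultaneously by the single global cardinality constraint $|R_\phi| = k \cdot n$, which your stage one then eliminates. Without this (or an equivalent) mechanism converting per-element constraints into a global cardinality constraint, your argument is incomplete at its central step.
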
 

\section{Hardness of Two Linear Order Axioms}
\label{sec:2lo}

In this section, we show \class{\#P_1}-hardness of WFOMC for \FOtwo{} with two linear order relations. Similarly to the proofs of undecidability results \cite{fs-FO2+8order,fs-FO2+lo}, we prove the hardness by reducing WFOMC of a certain sentence to a hard tiling problem. We first define a tiling problem whose number of valid tilings is \class{\#P_1}-hard to compute. Then we introduce an intermediate axiom, the grid axiom, which requires that domain elements form a grid, enabling binary relations $H$ and $V$ to represent the horizontal and vertical successor relations respectively. We show that WFOMC of \FOtwo{} sentences with a grid axiom is \class{\#P_1}-hard by encoding the \class{\#P_1}-hard tiling problem using the grid. Subsequently, we show that the grid axiom can be implemented by two linear order relations, hence the hardness follows.

\subsection{The Tiling Problem}

The hardness comes from a variant of the tiling problem. A \emph{tiling system} is a tuple $(\mathcal T, R_H, R_V)$ where $\mathcal T$ is a set of tiles and $R_H, R_V \subseteq \mathcal T \times \mathcal T$ are the horizontal and vertical constraints. A \emph{tiling} of $(\mathcal T, R_H, R_V)$ on a grid of $n$ rows and $m$ columns is a mapping $T: [n] \times [m] \to \mathcal T$ such that
\begin{itemize}
  \item $(T(i,j), T(i,j+1)) \in R_H$ for each $1 \le i \le n$ and $1 \le j < m$, and
  \item $(T(i,j), T(i+1,j)) \in R_V$ for each $1 \le i < n$ and $1 \le j \le m$.
\end{itemize}

Intuitively, the problem is to assign a set of tiles (e.g., we can think of a tile as a square with colors on its four edges, which is the classical Wang tiles \cite{wang-tile}) to a grid under the constraint that only specified pairs of tiles can be horizontally or vertically adjacent.

The \emph{bounded tiling problem} inputs the tiling system $(\mathcal T, R_H, R_V)$, the size of the grid $n\times n$ and the first row of the tiling $t_1, \cdots, t_n$, and asks if there is a tiling on the grid consistent to the given first row. It is proved in \cite{tiling-NPC} that the problem is \class{NP}-complete as it encodes the accepting paths of a polynomial time nondeterministic Turing machine. Moreover, the properties of the encoding can be summarized in the following lemma.

\begin{lemma}{(Summarization of the Proof of \cite[Theorem 7.2.1]{tiling-NPC})}\label{lemma:bounded-tiling}
For any one-tape nondeterministic Turing machine, there is a tiling system $(\mathcal T, R_H, R_V)$ such that any input $x$ of the Turing machine can be transformed to a tile sequence $t_1, \cdots, t_{p+2} \in \mathcal T$ where $p$ is the time bound for running $x$, and each accepting path of the Turing machine on input $x$ uniquely corresponds to a tiling with the first row $t_1, \cdots, t_p$ and vice versa. In particular, among $t_3, \cdots, t_{|x|+2}$ each tile represents a bit of $x$, and $t_{|x|+3}, \cdots, t_p$ are the same tile representing the default empty symbol on the tape.
\end{lemma}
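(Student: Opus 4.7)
The plan is the classical encoding of Turing machine computations as tilings. Each row of the tiling will represent an instantaneous configuration of the machine (tape contents together with head position and internal state), and the vertical constraints will force the $(i+1)$-th row to be the configuration obtained from the $i$-th row by exactly one computation step. The first row encodes the initial configuration on input $x$; the nondeterminism of the machine lives in an annotation attached to the head tile, so that fixing a tiling amounts to fixing the sequence of transition choices along one branch of the computation.

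First I would construct $\mathcal T$ out of three groups of tiles: \emph{tape tiles} $[a]$ for each tape symbol $a$, representing a cell currently not under the head; \emph{head tiles} $[a,q,\delta]$ for each combination of a symbol $a$, a state $q$, and an available nondeterministic transition $\delta=(a,q)\to(a',q',D)$; and boundary tiles $L,R$ marking the two ends of the tape window. The horizontal constraint $R_H$ enforces local well-formedness of a row: $L$ only at the left end and $R$ only at the right, at most one head tile per row, and a constant-size flag propagated along the row telling each cell whether it is the current head, the left neighbour of the head, or the right neighbour. The vertical constraint $R_V$ then implements the transition: a tape tile not flagged as a head neighbour is copied unchanged downward; a head tile $[a,q,\delta]$ is replaced below by the tape tile $[a']$; and the flagged neighbour in direction $D$ becomes a head tile in state $q'$ whose underlying symbol matches what was previously in that cell. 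Acceptance is enforced by an extra ``accepting reached'' bit carried by the flag, which can only become true by passing through a head tile whose state is accepting.

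The transformation of the input $x = x_1 \ldots x_{|x|}$ into the tile sequence described in the lemma is then immediate: $t_1$ is the left boundary, $t_2$ is the initial head tile carrying $x_1$ and the start state $q_0$, $t_3,\dots,t_{|x|+2}$ are the tape tiles of the remaining input bits, $t_{|x|+3},\dots,t_{p+1}$ are copies of a single fixed blank tape tile representing cells reachable within the time bound, and $t_{p+2}$ is the right boundary. The bijection between tilings and accepting runs is then verified row by row: given the $i$-th row, $R_H$ pins down the unique head column, and $R_V$ determines the $(i+1)$-th row uniquely from the transition recorded on that head tile; conversely an accepting run of length at most $p$ produces a unique tiling by laying down one configuration per row.

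The main obstacle I expect is the \emph{locality} of the vertical rule around the head, since one computation step naturally couples three adjacent cells across two consecutive rows. Designing the constant-size horizontal flag so that every vertical check inspects only one column, yet still admits exactly one legal next tile, is the step that requires careful case analysis (head in the interior, head next to a boundary, head arriving from the left or from the right). Once this is in place, the bijection and the formatting claim of the lemma follow with only minor bookkeeping.
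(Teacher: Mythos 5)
Your overall plan --- one row per configuration, horizontal constraints enforcing well-formedness of a row, vertical constraints simulating one step with the nondeterministic choice recorded on the head tile --- is exactly the classical construction that the paper is summarizing here (the paper gives no proof of its own for this lemma, it only cites the textbook argument). The genuine gap in your sketch is the treatment of \emph{halting}, which is precisely what the lemma's ``uniquely corresponds \dots and vice versa'' clause needs and what the later counting reduction relies on. In the tiling problem as defined in this paper only the \emph{first} row is prescribed; no constraint is placed on the last row. Consequently your ``accepting reached'' bit is never forced to become true, so with your rules the tilings of the grid would correspond to \emph{all} legal computation prefixes of the appropriate length, not only to accepting paths. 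The standard repair, which you must make explicit, is: (i) a head tile in a halting non-accepting configuration has no vertically compatible tile, so rejecting paths cannot be extended to a tiling of the whole grid; (ii) once the accepting state is reached, the row is propagated downward by a single deterministic copying rule, so a path that accepts at step $t<p$ fills the remaining rows in exactly one way. Since the machine in question always halts within the time bound, this dichotomy covers every computation path, and it is what makes the correspondence a bijection; without it one accepting path could produce many tilings (unconstrained rows after acceptance) or none (no rule below a halted head), and rejecting paths could spuriously produce tilings.

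Two smaller bookkeeping issues: your first-row layout puts the first input bit on the head tile $t_2$, whereas the lemma (and its later use, where the input is unary and the row must have the homogeneous shape one tile, one tile, $n$ equal tiles, $m$ equal tiles) needs the bits of $x$ to occupy exactly $t_3,\dots,t_{|x|+2}$, so the head/start marker $t_2$ should not absorb a bit of the input --- as written your ``remaining input bits'' occupy $|x|$ positions but there are only $|x|-1$ of them. Also make sure the uniqueness claim survives your flag mechanism: the flags (left-of-head, right-of-head, chosen transition) must be functions of the configuration and the chosen transition, not extra free choices, otherwise a single accepting path again corresponds to several tilings.
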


We now define a problem of counting the tilings where the first row of the grid is restricted to a certain format.

\begin{definition}[1-1-N-M Counting Tiling Problem]
Given a tiling system $(\mathcal{T}, R_H, R_V)$ and four distinguished tiles $t_1, t_2, t_3, t_4 \in \mathcal T$, the \emph{1-1-N-M counting tiling problem} $\tiling{\mathcal T, R_H, R_V, t_1, t_2, t_3, t_4}$ inputs two positive integers $n,m$ in unary and asks to compute the number of tilings of $(\mathcal{T}, R_H, R_V)$ on a grid of $n+m+2$ rows and $n+m+2$ columns where the first row consists of one tile of $t_1$, one tile of $t_2$, $n$ tiles of $t_3$ and $m$ tiles of $t_4$ in order.
\end{definition}

Since the 1-1-N-M counting tiling problem inputs $n$ and $m$ in unary, we measure its time complexity in terms of $n+m$.

\begin{lemma}\label{lemma:tiling}
There is a \class{\#P_1}-hard 1-1-N-M counting tiling problem $\tiling{\mathcal T, R_H, R_V, t_1, t_2, t_3, t_4}$.
\end{lemma}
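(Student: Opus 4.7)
The plan is to prove \class{\#P_1}-hardness of a suitable 1-1-N-M counting tiling problem by reducing a \class{\#P_1}-hard function to it, using the two lemmas already in hand.

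My first step would be to take the multi-tape universal NTM $M$ of \Cref{lemma:utm}, whose number of accepting paths on unary inputs is \class{\#P_1}-hard. Since \Cref{lemma:bounded-tiling} is phrased for one-tape NTMs, I would first apply the standard single-tape simulation, producing a one-tape NTM $M'$ over the same unary input alphabet, with the same accepting-path count on every input and running in some polynomial time $p(k)$ on inputs of length $k$. Feeding $M'$ into \Cref{lemma:bounded-tiling} yields a fixed tiling system $(\mathcal{T}, R_H, R_V)$ together with, for every input $x$, a first-row tile sequence whose legal tilings are in bijection with the accepting computations of $M'$ on $x$.

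The crucial observation is that the unary inputs of $M'$ already fit the 1-1-N-M template exactly. For $x = 1^k$, all the ``bit tiles'' $t_3, \ldots, t_{k+2}$ produced by \Cref{lemma:bounded-tiling} are one and the same tile (since every bit of $x$ equals $1$); denote it by $\tau_3$. The trailing ``empty tape'' tiles are, by the lemma, another single common tile $\tau_4$. Hence the first row is exactly one copy of $t_1$, one of $t_2$, then $k$ copies of $\tau_3$, then $p(k) - k - 2$ copies of $\tau_4$, laid out on a grid of side $p(k)$.

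The reduction now writes itself. Given a unary input $1^k$, I would compute in polynomial time the parameters $n := k$ and $m := p(k) - k - 2$, query the oracle for $\tiling{\mathcal T, R_H, R_V, t_1, t_2, \tau_3, \tau_4}$ at $(1^n, 1^m)$, and return the answer. By \Cref{lemma:bounded-tiling} this value equals the number of accepting paths of $M'$ on $1^k$, and hence of $M$ on $1^k$, which is a \class{\#P_1}-hard function. The only place where some care is needed is in lining up \Cref{lemma:bounded-tiling}'s grid dimensions and tile labels with the $(n+m+2) \times (n+m+2)$ grid of the 1-1-N-M problem; once the multi- to one-tape conversion is done and the unary-input observation is made, all parameters match and the whole reduction is polynomial time, so no genuinely new combinatorics appears beyond what the two cited lemmas already supply.
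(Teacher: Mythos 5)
Your proposal is correct and follows essentially the same route as the paper: convert the universal machine of \Cref{lemma:utm} to a one-tape NTM, apply the encoding of \Cref{lemma:bounded-tiling}, note that the unary input forces the first row into the 1-1-N-M format with fixed tiles, and reduce by computing $(n,m)$ from $1^k$ in polynomial time. The only differences are cosmetic (the paper fixes a quadratic bound $cn^2$ and a grid of side $cn^2+2$, i.e.\ $m = cn^2 - n$, while you work with a generic polynomial $p(k)$ and side $p(k)$), and this bookkeeping offset does not affect the argument.
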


The lemma is proved mainly by encoding the Turing machine in \Cref{lemma:utm} in the same way as in \Cref{lemma:bounded-tiling} so that each tiling of certain $n,m$ corresponds to an accepting path of the Turing machine on the input $n$ and time and space bound $n+m+2$.

\begin{proof}
Let $M$ be the Turing machine in \Cref{lemma:utm}. $M$ can be transformed to a one-tape quadratic time Turing machine $M'$ which preserves all other properties of $M$ (nondeterminism, unary input alphabet and \class{\#P_1}-hardness of computing the number of accepting paths) by some well-known tricks (e.g., see \cite[Claim 1.6]{computational-complexity}). Let $c$ be the constant such that $M'$ always terminates within $cn^2$ steps for any input of size $n$ and the length of the tape is bounded by $cn^2$.

The same encoding in \Cref{lemma:bounded-tiling} can be applied to $M'$ since $M'$ is also a one-tape nondeterministic Turing machine. Let $(\mathcal T, R_H, R_V)$ be the tiling system to encode $M'$. Moreover, as the input of $M'$ is in unary, we use the same encoding to obtain the first row $t_1, \cdots, t_{cn^2+2}$ so that $t_3 = \cdots = t_{n+2}$ is the tile representing the input symbol, and $t_{n+3} = \cdots = t_{cn^2+2}$ is the tile representing the default empty symbol. By the same argument in \cite[Theorem 7.2.1]{tiling-NPC}, each tiling on the $(cn^2+2)\times (cn^2+2)$ grid with the specified first row corresponds to an accepting path of $M'$ on the input size $n$, hence $\tiling{\mathcal T, R_H, R_V, t_1, t_2, t_3, t_{n+3}}$ on input $n$ and $m = cn^2-n$ equals the number of accepting paths of $M'$ on the input size $n$.
\end{proof}

\subsection{The Grid Axiom}

The grid axiom requires that the elements of the domain are arranged into a grid and the horizontal and vertical successor relations can be accessed by the binary relations $H$ and $V$ respectively.

\begin{definition}{(The Grid Axiom)}\label{def:grid}
The \emph{grid axiom} over a domain of size $n^2$, denoted by $\gridaxiom(H,V)$, involves two binary relations $H$ and $V$ such that there is a way to arrange the $n^2$ elements into a grid of $n$ rows and $n$ columns, and
\begin{itemize}
  \item $H(x,y)$ is true if and only if $x$ is at the $i$-th row and $j$-th column and $y$ is at the $i$-th row and $j+1$-th column, for some $1 \le i \le n$ and $1 \le j < n$, and
  \item $V(x,y)$ is true if and only if $x$ is at the $i$-th row and $j$-th column and $y$ is at the $i+1$-th row and $j$-th column, for some $1 \le i < n$ and $1 \le j \le n$.
\end{itemize}
\end{definition}

With the grid axiom, we are able to encode the 1-1-N-M counting tiling problem in \Cref{lemma:tiling} such that each valid tiling corresponds to the same number of models of the sentence, and thus computing the number of models of the sentence is as hard as computing the 1-1-N-M counting tiling problem.

\begin{lemma}\label{lemma:grid-tm}
There is an \FOtwo{} sentence with a grid axiom whose WFOMC is \class{\#P_1}-hard.
\end{lemma}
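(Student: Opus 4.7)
The plan is to reduce a \class{\#P_1}-hard 1-1-N-M counting tiling problem from \Cref{lemma:tiling} to WFOMC of a single fixed \FOtwo{} sentence with a grid axiom, and extract the tiling count from WFOMC values by polynomial interpolation. Because a WFOMC oracle answers any weighting functions, a single sentence can encode the entire family of tiling counts as a low-degree polynomial.

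Fix a \class{\#P_1}-hard problem $\tiling{\mathcal{T}, R_H, R_V, t_1, t_2, t_3, t_4}$ given by \Cref{lemma:tiling}. I would construct an \FOtwo{} sentence $\sentence$ over the vocabulary $\{H, V, A\} \cup \{U_t : t \in \mathcal{T}\}$ that conjoins the grid axiom $\gridaxiom(H, V)$ with the following: every element carries exactly one tile $U_t$, and whenever $H(x,y)$ (resp. $V(x,y)$) holds, the pair of tiles at $x$ and $y$ lies in $R_H$ (resp. $R_V$). Using the two-variable abbreviations $\mathrm{FirstRow}(x) := \lnot \exists y\, V(y,x)$ and $\mathrm{TopLeft}(x) := \mathrm{FirstRow}(x) \land \lnot \exists y\, H(y,x)$, I would additionally require that any $\mathrm{TopLeft}$ element carries $t_1$, its $H$-successor carries $t_2$, every remaining $\mathrm{FirstRow}$ element carries $t_3$ or $t_4$, and no $\mathrm{FirstRow}$ element labelled $t_3$ has a $\mathrm{FirstRow}$ $H$-predecessor labelled $t_4$ (so that all $t_3$'s in the first row precede all $t_4$'s). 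Finally, the auxiliary unary $A$ is fixed by $A(x) \leftrightarrow \mathrm{FirstRow}(x) \land U_{t_3}(x)$. All these constraints are expressible with two logical variables through the standard variable-reuse tricks of \FOtwo{}.

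Given inputs $n, m$ in unary, I would query the oracle for $\wfomc(\sentence, (n+m+2)^2, \weight, \negweight)$ under the weighting scheme $\weight(A) = x$, $\negweight(A) = 1$, and weight $1$ for every other predicate. Because the grid axiom arranges the $(n+m+2)^2$ elements into an $(n+m+2) \times (n+m+2)$ grid, exactly $n+m$ first-row positions past $t_1, t_2$ must be labelled $t_3$ or $t_4$ in the forced pattern $t_3^a t_4^{n+m-a}$. Hence the WFOMC equals the univariate polynomial $p(x) = \sum_{a=0}^{n+m} c_a x^a$, where $c_a$ counts the tilings whose first row is $t_1 t_2 t_3^a t_4^{n+m-a}$. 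Querying the oracle at $n+m+1$ distinct values of $x$ and interpolating recovers every $c_a$; in particular $c_n$ equals $\tiling{\mathcal{T}, R_H, R_V, t_1, t_2, t_3, t_4}$ on input $(n,m)$. The whole reduction runs in time polynomial in $n+m$, yielding \class{\#P_1}-hardness.

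The main obstacle is the two-variable encoding of the first-row structure: identifying the first row and the top-left corner without any global ordering axiom, and forcing the prefix pattern $t_1, t_2, t_3^*, t_4^*$ in the correct order using only two variables. Once the right guarded formulas are pinned down, the tile-compatibility axioms for $R_H$ and $R_V$ are routine, and the interpolation step is a standard low-degree polynomial recovery.
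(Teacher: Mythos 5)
Your proposal is correct and essentially the paper's own argument: encode the tiling with one unary predicate per tile on the grid supplied by $\gridaxiom(H,V)$, identify the top row and top-left corner from $H,V$, force the first-row pattern $t_1 t_2 t_3^{a} t_4^{n+m-a}$, and pin down $a=n$ by weighted counting --- the paper does this last step with cardinality constraints $|R_1T_3|=n$, $|R_1T_4|=m$ eliminated via \Cref{lemma:c2+cc}, which is exactly your explicit weight-interpolation on the auxiliary predicate $A$. The only slip is a normalization: your coefficient $c_n$ equals $\bigl((n+m+2)^2\bigr)!$ times the number of tilings (each tiling yields one model per arrangement of the elements into the grid), so the reduction must divide by this factor, which is computable in time polynomial in $n+m$.
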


\begin{proof}

Let $\tiling{\mathcal T, R_H, R_V, t_1, t_2, t_3, t_4}$ be the \class{\#P_1}-hard 1-1-N-M counting tiling problem in \Cref{lemma:tiling}. We now encode $\tiling{\mathcal T, R_H, R_V, t_1, t_2, t_3, t_4}$ for the input $n,m$ by an \FOtwo{} sentence with a grid axiom $\gridaxiom(H,V)$ over a domain of size $(n+m+2)^2$. Elements are expected to be arranged into $n+m+2$ rows and $n+m+2$ columns, where the element at the $i$-th row and the $j$-th column represents the cell at the $i$-th row and the $j$-th column of the tiling grid. Let $\mathcal T$ be $\{t_1, \cdots, t_k\}$. We introduce fresh unary relations $T_1(x), \cdots, T_k(x)$ where $T_i(x)$ being true indicates that the cell $x$ is mapped to the tile $t_i$.

Formally, we define the following sentences.

\begin{itemize}
  \item The top and left borders of the grid are identified by unary relations $Top$ and $Left$ according to $H$ and $V$.
  \begin{equation}\label{eq:tm1}
    \begin{aligned}
      & \forall x (Top(x) \leftrightarrow \forall y \lnot V(y,x)) \\
      \land & \forall x (Left(x) \leftrightarrow \forall y \lnot H(y,x)).
    \end{aligned}
  \end{equation}
  \item Each cell is mapped to a unique tile.
  \begin{equation}\label{eq:tm2}
    \forall x \bigvee_{i=1}^k \left( T_i(x) \land \bigwedge_{j \neq i} \lnot T_j(x) \right).
  \end{equation}
  \item The tiling satisfies the constraints $R_H, R_V$.
  \begin{equation}\label{eq:tm3}
    \begin{aligned}
        & \forall x \forall y \bigl(H(x,y) \to \bigvee_{(t_i, t_j \in R_H)} (T_i(x) \land T_j(y))\bigl) \\
        \land & \forall x \forall y \bigl(V(x,y) \to \bigvee_{(t_i, t_j \in R_V)} (T_i(x) \land T_j(y))\bigl).
    \end{aligned}
  \end{equation}
  \item The first row consists of one tile of $t_1$, one tile of $t_2$, $n$ tiles of $t_3$ and $m$ tiles of $t_4$ in order.
  \begin{equation}\label{eq:tm4}
    \begin{aligned}
        & \forall x \bigl(R_1T_1(x) \leftrightarrow (Top(x) \land Left(x))\bigl) \\
        \land & \forall x \bigl(R_1T_2(x) \leftrightarrow \exists y (H(y,x) \land R_1T_1(y))\bigl) \\
        \land & \forall x \Bigl(R_1T_3(x) \leftrightarrow \exists y \bigl(H(y,x) \land (R_1T_2(y) \lor R_1T_3(y))\bigl)\Bigl) \\
        \land & \forall x \Bigl(R_1T_4(x) \leftrightarrow \exists y \bigl(H(y,x) \land (R_1T_3(y) \lor R_1T_4(y))\bigl)\Bigl) \\
        \land & |R_1T_3| = n \\
        \land & |R_1T_4| = m \\
        \land & \bigwedge_{i=1}^4 \forall x (R_1T_i(x) \to T_i(x)),
    \end{aligned}
  \end{equation}
  where $R_1T_i$ ($i = 1,2,3,4$) are auxiliary fresh unary relations.
\end{itemize}


Let $\sentence$ be the conjunction of \Cref{eq:tm1,eq:tm2,eq:tm3,eq:tm4}. Each valid tiling of $\tiling{\mathcal T, R_H, R_V, t_1, t_2, t_3, t_4}$ is encoded by $((n+m+2)^2)!$ models of $\sentence \land \gridaxiom(H,V)$ over a domain of size $(n+m+2)^2$ due to the $((n+m+2)^2)!$ ways to organize a grid by $(n+m+2)^2$ elements. Consequently, any algorithm computing FOMC of $\sentence \land \gridaxiom(H,V)$ in time polynomial in the domain size yields an algorithm to compute the number of valid tilings of $\tiling{\mathcal T, R_H, R_V, t_1, t_2, t_3, t_4}$ in time polynomial in $n+m$, thus the FOMC is \class{\#P_1}-hard.

Note that $\sentence$ is an \FOtwo{} sentence with a cardinality constraint $|Row1T4| = m$. By \Cref{lemma:c2+cc}, there is another \FOtwo{} sentence $\sentence'$ without cardinality constraints such that WFOMC of $\sentence'$ with a grid axiom is \class{\#P_1}-hard as well.
\end{proof}

\subsection{Implementing the Grid Axiom by Two Linear Order Axioms}

Next, we show that the grid axiom over a domain of $n^2$ elements can be further encoded by two linear order relations in WFOMC.

Let us discuss in detail why similar encodings of a grid in the literature such as using eight linear orders \cite{fs-FO2+8order} or three linear orders \cite{fs-FO2+lo} are not applicable in model counting. These encodings aim to ensure that every model of the constructed sentence satisfies a property called ``grid-like'' defined in \cite{fs-FO2+8order}. A model is grid-like if a grid can be homomorphically embedded into it, i.e., there is a mapping $G$ from the grid cells to domain elements along with two binary relations $H,V$ such that the horizontal (resp. vertical) successor relation of any pair of cells $x,y$ implies $H(G(x),G(y))$ (resp. $V(G(x),G(y))$. Using the relations $H$ and $V$ one can encode the tiling problem using a similar technique as in the previous subsection.

However, $G$ is neither required to be injective nor surjective, and the encoding makes no restriction on unmapped elements or elements that are not horizontally or vertically adjacent. This results in non-unified numbers of ways to extract a grid from a model and allows for the same grid to be extracted from different models. For example, one can split the elements into four groups, each encoding a grid. Additionally, based on a model encoding a valid tiling, one can also add more true ground literals of $H$ and $V$ as much as possible. Therefore, computing FOMC or WFOMC for such sentences does not make sense when determining the number of valid tilings.

We, therefore, give a new encoding of a true grid defined by \Cref{def:grid} using two linear order relations such that each grid exactly corresponds to $(n^2)!$ models and thus the number of accepting paths of the Turing machine encoded by the grid can be counted precisely. \Cref{fig:L1L2} shows the target shape of our grid with two linear orders.

Our encoding benefits from the tractability of emulating arbitrary counting quantifiers and cardinality constraints in polynomial time with oracles for WFOMC as stated in \Cref{lemma:c2+cc}, which is not accessible in the finite satisfiability problem when we only have access to the SAT oracle for \FOtwo{}.

\begin{lemma}\label{lemma:grid-2lo}
Let $\sentence$ be an \FOtwo{} sentence with distinguished binary relations $H$ and $V$, and $n \ge 2$ be an integer.
There is an \FOtwo{} sentence $\sentence'$ with distinguished binary relations $L_1$ and $L_2$ such that WFOMC of $(\sentence \land \gridaxiom(H,V), n^2)$ can be reduced to WFOMC of $(\sentence' \land \loaxiom(L_1) \land \loaxiom(L_2), n^2)$.
\end{lemma}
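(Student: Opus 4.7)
My plan is to encode the grid structure inside $\sentence'$ by enriching the language with $2n$ fresh unary predicates $R_1,\dots,R_n$ and $C_1,\dots,C_n$ with the intended meaning ``$R_k(x)$ iff $x$ occupies row $k$'' and analogously for $C_k$. Concretely, I take $\sentence'$ to be the conjunction of $\sentence$ with: (i) partition axioms $\forall x \bigvee_{k=1}^n (R_k(x) \land \bigwedge_{l \ne k} \lnot R_l(x))$ and the analogous one for the $C_k$; (ii) cardinality constraints $|R_k| = n$ and $|C_k| = n$ for every $k \in [n]$; (iii) a uniqueness axiom $\forall x \forall y \bigwedge_{k,l \in [n]} (R_k(x) \land C_l(x) \land R_k(y) \land C_l(y) \to x = y)$; (iv) explicit definitions $H(x,y) \leftrightarrow \bigvee_{k \in [n],\, a \in [n-1]} (R_k(x) \land R_k(y) \land C_a(x) \land C_{a+1}(y))$ and $V(x,y) \leftrightarrow \bigvee_{k \in [n-1],\, a \in [n]} (C_a(x) \land C_a(y) \land R_k(x) \land R_{k+1}(y))$; and (v) polynomial-sized disjunctive definitions of $L_1$ as the lexicographic ``row-then-column'' order and of $L_2$ as the ``column-then-row'' order, e.g.\ $L_1(x,y) \leftrightarrow \bigvee_{k < l}(R_k(x) \land R_l(y)) \,\lor\, \bigvee_{k}\bigvee_{a \le b}(R_k(x) \land R_k(y) \land C_a(x) \land C_b(y))$. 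Every conjunct uses at most two logical variables, so $\sentence'$ is an \FOtwo{} sentence (modulo the cardinality constraints, which are eliminated via \Cref{lemma:c2+cc}) of size polynomial in $n$.

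Next I would exhibit a bijection between models on the two sides. Given any model of $\sentence \land \gridaxiom(H,V)$ over a domain of size $n^2$, the grid axiom supplies a grid embedding, and this embedding is uniquely determined by $(H,V)$: the single element with no $H$-predecessor and no $V$-predecessor is forced to occupy cell $(1,1)$, and the remaining positions are then pinned down inductively along $H$- and $V$-paths. This canonical embedding induces a unique interpretation of $R_k, C_k$, and hence of $L_1, L_2$, which together satisfy $\sentence' \land \loaxiom(L_1) \land \loaxiom(L_2)$. Conversely, in any model of $\sentence' \land \loaxiom(L_1) \land \loaxiom(L_2)$ axioms (i)--(iii) force the $R_k, C_k$ to describe a legitimate grid partition, whence (iv) makes $H,V$ satisfy $\gridaxiom(H,V)$ and (v) forces $L_1, L_2$ to be linear orders consistent with the partition, so the $\loaxiom$-axioms are automatically entailed and do not discard any model. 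Choosing weights $(\weight',\negweight')$ that coincide with $(\weight,\negweight)$ on the relations of $\sentence$ and are constantly $1$ on the fresh relations $R_k, C_k, L_1, L_2$ makes corresponding models carry equal weight, yielding $\wfomc(\sentence \land \gridaxiom(H,V), n^2, \weight, \negweight) = \wfomc(\sentence' \land \loaxiom(L_1) \land \loaxiom(L_2), n^2, \weight', \negweight')$, a one-oracle-call polynomial-time reduction.

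The main obstacle I expect is the careful combinatorial verification that the definitional formula for $L_1$ (and symmetrically for $L_2$) really does define a linear order whenever the row/column axioms are satisfied; checking antisymmetry and totality needs a case split depending on whether the two arguments share a row, and this is where the uniqueness axiom (iii) does the heavy lifting by collapsing pairs with identical row and column. A secondary technical point is to double-check the weight accounting: once $H$ and $V$ on the right-hand side are fully determined by the fresh row/column predicates, they contribute no extra degrees of freedom to $\wfomc$, and the cardinality constraints that \Cref{lemma:c2+cc} introduces auxiliary predicates for must not spoil the identity between the two weighted sums.
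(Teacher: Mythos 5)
Your construction has a fatal structural flaw: the sentence $\sentence'$ you build is not a fixed \FOtwo{} sentence over a fixed vocabulary, but a family of sentences whose signature ($2n$ fresh unary predicates $R_1,\dots,R_n,C_1,\dots,C_n$) and size (disjunctions of length $\Theta(n^2)$) grow with the domain size. The whole point of this lemma in the paper is data complexity: \Cref{thm:2lo} needs a \emph{single} sentence with two linear order relations whose WFOMC, as a function of $n$ alone, is \class{\#P_1}-hard, and the notion of reduction used here fixes the sentence and lets only the weighting functions vary across oracle calls. With an $n$-dependent $\sentence'$ you are reducing to a different counting problem for every $n$, which establishes hardness of no fixed sentence; essentially you have hard-coded the grid coordinates into the vocabulary rather than deriving the grid from the two linear orders, which is precisely the difficulty the lemma is supposed to overcome. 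The paper instead uses a constant-size signature: it defines $First_i$, $Last_i$ and the successor $S_i$ of each order with counting quantifiers, marks the borders with constantly many unary predicates $Left$, $Right$, $Top$, $Bottom$ subject to constantly many cardinality constraints ($|Left|=|Right|=|Top|=|Bottom|=n$), and then forces the square-grid shape through interaction axioms between $L_1$ and $L_2$ (e.g.\ $S_1$-steps off the right border increase $L_2$, $S_2$-steps off the bottom increase $L_1$, and the two successors never coincide), from which $H$ and $V$ are defined.

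A second, related gap: even if one tolerated an $n$-dependent sentence, your appeal to \Cref{lemma:c2+cc} to dispose of the cardinality constraints $|R_k|=n$, $|C_k|=n$ does not go through. That lemma (and the underlying interpolation technique) is for a fixed sentence with a fixed number of constraints; eliminating $2n$ simultaneous exact-cardinality constraints requires multivariate interpolation in $2n$ weight parameters, i.e.\ on the order of $(n^2+1)^{2n}$ oracle calls, which is not polynomial in $n$. In the paper only a constant number of cardinality constraints (and counting quantifiers) appear, so the reduction stays polynomial. The parts of your argument that do work — the uniqueness of the grid arrangement determined by $(H,V)$ for $n\ge 2$, the weight-$1$ assignment to fresh predicates, and the verification that your explicitly defined $L_1,L_2$ are linear orders — are fine but do not rescue the construction, because the object you reduce to is not of the form the lemma (and the downstream theorem) requires.
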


\begin{proof}[Proof of \Cref{lemma:grid-2lo}]
Our goal is to arrange the $n^2$ elements to a grid of $n$ rows and $n$ columns using $L_1$ and $L_2$ so that the horizontal and vertical successor relations $H,V$ satisfying \Cref{def:grid} can be defined.
To begin, we define the following useful relations by unary and binary relations using \Ctwo{} sentences for each linear order relation $L_i$ ($i \in \{1,2\}$):
\begin{itemize}
  \item The smallest element in the order, $First_i(x)$, by
  \begin{equation}\label{eq:2lo-first}
    \forall x \left( First_i(x) \leftrightarrow \forall y \ L_i(x,y) \right).
  \end{equation}
  \item The largest element in the order, $Last_i(x)$, by
  \begin{equation}\label{eq:2lo-last}
    \forall x \left( Last_i(x) \leftrightarrow \forall y \ L_i(y,x) \right).
  \end{equation}
  \item The successor relation of the order, $S_i(x,y)$, by
  \begin{equation}\label{eq:2lo-s}
    \begin{aligned}
      & \forall x (\lnot Last_i(x) \to \exists_{=1} y \ S_i(x,y)) \\
      \land & \forall x (\lnot First_i(x) \to \exists_{=1} y \ S_i(y,x)) \\
      \land & \forall x \forall y (S_i(x,y) \to L_i(x,y)),
    \end{aligned}
  \end{equation}
  which is a simplification of the trick in \cite[Lemma 2]{WFOMC-linearorder-axiom}.
\end{itemize}

\noindentparagraph{Building the Skeleton Using $L_1$}
We now build a basic skeleton of the grid by making constraints on $L_1$. We mark the elements at the left border and the right border by unary relations $Left$ and $Right$ with the following constraints:
\begin{itemize}
  \item Each border contains $n$ elements and the two borders are disjoint.
  \begin{equation}\label{eq:2lo-LeftRight1}
    \begin{aligned}
      & |Left| = |Right| = n \\
      \land & \forall x (\lnot Left(x) \lor \lnot Right(x)).
    \end{aligned}
  \end{equation}
  \item The first element of $L_1$ is at the left border and the last element is at the right border.
  \begin{equation}\label{eq:2lo-LeftRight2}
    \forall x ((First_1(x) \to Left(x)) \land (Last_1(x) \to Right(x))).
  \end{equation}
  \item $Right$ and $Left$ should be adjacent except for $First$ and $Last$.
  \begin{equation}\label{eq:2lo-LeftRight3}
    \forall x \forall y \left( S_1(x,y) \to (Right(x) \leftrightarrow Left(y)) \right).
  \end{equation}
\end{itemize}

Now we have exactly $n$ $Left$-$Right$ pairs in order. Each pair can be treated as an indicator of the start and the end of a row, therefore elements are arranged in the shape of \Cref{fig:L1-1}: There are $n$ rows starting with a $Left$ mark and ending with a $Right$ mark but each row can have an arbitrary length.

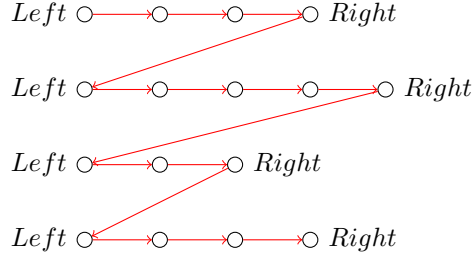
\begin{figure}
  \centering
      \begin{tikzpicture}
        \tikzstyle{roundnode}=[circle, draw, inner sep=0pt, minimum size=2mm]
        \tikzstyle{roundnodeleft}=[circle, draw, inner sep=0pt, minimum size=2mm, label=left:{\small $Left$}]
        \tikzstyle{roundnoderight}=[circle, draw, inner sep=0pt, minimum size=2mm, label=right:{\small $Right$}]
        \foreach \j in {0,3} {
            \node[roundnodeleft] (a0\j) at (0,\j) {};
            \foreach \i in {1,2} \node[roundnode] (a\i\j) at (\i,\j) {};
            \node[roundnoderight] (a3\j) at (3,\j) {};
            \foreach \i in {0,1,2}
                \pgfmathtruncatemacro{\k}{\i+1}
                \path[->,red] (a\i\j) edge (a\k\j);
        }
        \node[roundnodeleft] (a02) at (0,2) {};
        \node[roundnoderight] (a42) at (4,2) {};
        \foreach \i in {1,2,3} \node[roundnode] (a\i2) at (\i,2) {};
        \foreach \i in {0,1,2,3}
            \pgfmathtruncatemacro{\k}{\i+1}
            \path[->,red] (a\i2) edge (a\k2);
        \node[roundnodeleft] (a01) at (0,1) {};
        \node[roundnode] (a11) at (1,1) {};
        \node[roundnoderight] (a21) at (2,1) {};
        \path[->,red] (a01) edge (a11);
        \path[->,red] (a11) edge (a21);
        \path[->,red] (a33) edge (a02);
        \path[->,red] (a42) edge (a01);
        \path[->,red] (a21) edge (a00);
      \end{tikzpicture}
  \caption{A possible skeleton after making the constraints \Cref{eq:2lo-LeftRight1,eq:2lo-LeftRight2,eq:2lo-LeftRight3}, which consists of $n$ rows each starts with $Left$ and ends with $Right$. The red route represents $L_1$.}\label{fig:L1-1}
\end{figure}

We then mark nodes at the top and bottom by unary relations $Top$ and $Bottom$. If a node is marked with such a label, all nodes within the same row will have the same label. However, only the row containing $First_1$ can be $Top$ and only the row containing $Last_1$ can be $Bottom$. Therefore, only the first row is marked $Top$ and only the last row is marked $Bottom$. We further restrict the two rows to length $n$.
\begin{equation}\label{eq:TopBottom}
  \begin{aligned}
    & \forall x \forall y \Bigl((S_1(x,y) \land \lnot Right(x)) \\
    & \:\:\:\:\:\:\:\:\:\:\:\:\: \to \bigl((Top(x) \leftrightarrow Top(y)) \land (Bottom(x) \leftrightarrow Bottom(y))\bigl)\Bigl) \\
    \land & \forall x \bigl(First_1(x) \leftrightarrow (Left(x) \land Top(x))\bigl) \\
    \land & \forall x \bigl(Last_1(x) \leftrightarrow (Right(x) \land Bottom(x))\bigl) \\
    \land & |Top| = |Bottom| = n.
  \end{aligned}
\end{equation}

\noindentparagraph{Restricting the Shape Using $L_2$}
Next, we force the skeleton to be a grid by making constraints on $L_2$.

\begin{itemize}
  \item Both linear orders share the same starting and ending positions.
  \begin{equation}\label{eq:L2-1}
    \forall x ((First_1(x) \leftrightarrow First_2(x)) \land (Last_1(x) \leftrightarrow Last_2(x))).
  \end{equation}
  \item If $y$ is the $L_1$-successor of $x$ and $x$ is not at the right border, $x$ should be $L_2$-smaller than $y$.
  \begin{equation}\label{eq:L2-2}
    \forall x \forall y \bigl((S_1(x,y) \land \lnot Right(x)) \to L_2(x,y)\bigl).
  \end{equation}
  \item If $y$ is the $L_2$-successor of $x$ and $x$ is not at the bottom border, $x$ should be $L_1$-smaller than $y$.
  \begin{equation}\label{eq:L2-3}
    \forall x \forall y \bigl((S_2(x,y) \land \lnot Bottom(x)) \to L_1(x,y)\bigl).
  \end{equation}
  \item The $L_1$-successor and the $L_2$-successor of element $x$ are different.
  \begin{equation}\label{eq:L2-4}
    \forall x \forall y (\lnot S_1(x,y) \lor \lnot S_2(x,y)).
  \end{equation}
\end{itemize}

The following three observations imply the route of $L_2$ and the arrangement of elements.
\begin{enumerate}
  \item For any element $x$, the elements in the same row with $x$ but on its left should be $L_2$-smaller than $x$ due to \Cref{eq:L2-2}. In other words, each row will be visited by $L_2$ from left to right.
  \item For each $x$ not at the bottom, let $y$ be its successor in $L_2$. $y$ cannot lie in the rows above $x$ due to \Cref{eq:L2-3}, or in the same row with $x$ because otherwise by Observation (1) $y$ can only be the element next to $x$ at its right but that is against \Cref{eq:L2-4}.
  \item Based on the two observations above, the route of $L_2$ can be partitioned to several paths, each of which only descends until reaching the bottom. The length of each path is at most $n$ as there are only $n$ rows, and there can be at most $n$ paths since the bottom side has a length of $n$. Given that the size of the domain is $n^2$, there must be $n$ paths each of size $n$. Therefore, $L_2$ follows the blue route in \Cref{fig:L1L2} and the arrangement of elements forms a square grid.
\end{enumerate}

\begin{figure}
  \centering
      \begin{tikzpicture}
        \tikzstyle{roundnode}=[circle, draw, inner sep=0pt, minimum size=2mm]
        \foreach \i in {0,1,2,3}
            \foreach \j in {0,1,2,3}
                \node[roundnode] (a\i\j) at (\i,\j) {};
        \foreach \i in {0,1,2} {
            \foreach \j in {0,1,2,3}
                \pgfmathtruncatemacro{\k}{\i+1}
                \path[->,red] (a\i\j) edge (a\k\j);
            \pgfmathtruncatemacro{\k}{\i+1}
            \path[->,red] (a3\k) edge (a0\i);
        }
        \foreach \j in {0,1,2} {
            \foreach \i in {0,1,2,3}
                \pgfmathtruncatemacro{\k}{\j+1}
                \path[->,blue] (a\i\k) edge (a\i\j);
            \pgfmathtruncatemacro{\k}{\j+1}
            \path[->,blue] (a\j0) edge (a\k3);
        }
        \node[] (Top) at (1.5,3.5) {\small $Top$} ;
        \node[] (Bottom) at (1.5,-0.5) {\small $Bottom$} ;
        \node[] (Left) at (-0.5,1.5) {\small $Left$};
        \node[] (Right) at (3.6,1.5) {\small $Right$};
      \end{tikzpicture}
  \caption{The elements form a grid using linear orders $L_1$ (the red route) and $L_2$ (the blue route).}\label{fig:L1L2}
\end{figure}
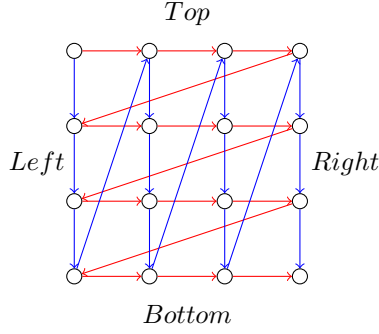

With the grid structure of elements, the horizontal and vertical successor relations can be defined as follows.
\begin{equation}\label{eq:get-hv}
  \begin{aligned}
    & \forall x \forall y \bigl(H(x,y) \leftrightarrow (S_1(x,y) \land \lnot Right(x))\bigl) \\
    \land & \forall x \forall y \bigl(V(x,y) \leftrightarrow (S_2(x,y) \land \lnot Bottom(x))\bigl).
  \end{aligned}
\end{equation}

It can be verified that \Cref{def:grid} is satisfied by assigning coordinates to elements based on the grid in Observation (3).

\noindentparagraph{Obtaining $\wfomc(\sentence, n^2)$}
Let $\sentence_l$ be the conjunction of $\sentence$ and \Cref{eq:2lo-first,eq:2lo-last,eq:2lo-s,eq:2lo-LeftRight1,eq:2lo-LeftRight2,eq:2lo-LeftRight3,eq:TopBottom,eq:L2-1,eq:L2-2,eq:L2-3,eq:L2-4,eq:get-hv}, which is a \Ctwo{} sentence with cardinality constraints. By the analysis above we obtain a unique grid from each model of $\sentence_l \land \loaxiom(L_1) \land \loaxiom(L_2)$. In addition, each assignment of the elements into a grid uniquely corresponds to a model of such a sentence. Therefore,
\begin{equation*}
  \begin{aligned}
    & \wfomc(\sentence \land \gridaxiom(H,V), n^2, \weight, \negweight)
    = \wfomc(\sentence_l \land \loaxiom(L_1) \land \loaxiom(L_2), n^2, \weight, \negweight)
  \end{aligned}
\end{equation*}
holds for any weighting functions $\weight, \negweight$.

By \Cref{lemma:c2+cc}, there is a sentence $\sentence'$ in \FOtwo{} without cardinality constraints such that WFOMC of $(\sentence_l \land \loaxiom(L_1) \land \loaxiom(L_2), n^2)$, as well as WFOMC of $(\sentence \land \gridaxiom(H,V), n^2)$, can be reduced to WFOMC of $(\sentence' \land \loaxiom(L_1) \land \loaxiom(L_2), n^2)$.
\end{proof}

We are now ready to state the main hardness result of this work.

\begin{theorem}\label{thm:2lo}
There is an \FOtwo{} sentence with two linear order relations whose WFOMC is \class{\#P_1}-hard to compute.
\end{theorem}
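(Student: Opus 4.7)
The plan is to establish \Cref{thm:2lo} by composing \Cref{lemma:grid-tm} and \Cref{lemma:grid-2lo}. From \Cref{lemma:grid-tm} I would extract a fixed \FOtwo{} sentence $\sentence$ such that, for domains of size $N^2$ with $N = n+m+2$, $\fomc(\sentence \land \gridaxiom(H,V), N^2)$ equals $(N^2)!$ times the number of valid tilings of a prescribed \class{\#P_1}-hard instance $\tiling{\mathcal T, R_H, R_V, t_1, t_2, t_3, t_4}$ on input $(n,m)$; dividing out the factorial recovers the tiling count, and hence this FOMC is \class{\#P_1}-hard to compute as a function of $N$.

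Next, I would apply \Cref{lemma:grid-2lo} to this $\sentence$ to obtain an \FOtwo{} sentence $\sentence'$ with two distinguished binary relations $L_1, L_2$ such that $\wfomc(\sentence \land \gridaxiom(H,V), N^2, \weight, \negweight)$ can be computed in polynomial time in $N$ from an oracle for $\wfomc(\sentence' \land \loaxiom(L_1) \land \loaxiom(L_2), N^2, \cdot, \cdot)$. Composing the two reductions transforms an input $n$ to the \class{\#P_1}-hard problem from \Cref{lemma:tiling} (with $m = cn^2 - n$) into a polynomial-time-bounded batch of oracle queries at domain size $N^2$ with $N = cn^2 + 2$. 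Since $N^2$ in unary has length polynomial in $n$, any polynomial-time-in-domain-size algorithm for WFOMC of $\sentence' \land \loaxiom(L_1) \land \loaxiom(L_2)$ would solve the underlying \class{\#P_1}-hard counting problem in polynomial time in $n$, yielding the claimed hardness.

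I anticipate no substantial technical obstacle in this final step: all of the combinatorial content, namely the encoding of a genuine grid by two linear orders and the quantitative bookkeeping (each tiling corresponds to exactly $(N^2)!$ models), has already been handled in \Cref{lemma:grid-tm} and \Cref{lemma:grid-2lo}. The only item worth a sanity check is that the weights introduced along the way remain polynomially bounded in $N$: the original tiling reduction uses unweighted FOMC, \Cref{lemma:grid-2lo} introduces only finitely many fresh relations, and the elimination of cardinality constraints via \Cref{lemma:c2+cc} contributes at most polynomially many oracle calls with efficiently computable weights, so the entire chain is a polynomial-time reduction and the theorem follows.
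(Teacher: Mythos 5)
Your proposal is correct and takes essentially the same route as the paper: the paper's own proof of this theorem is just the composition of \Cref{lemma:grid-tm} and \Cref{lemma:grid-2lo}, exactly as you describe, and your additional bookkeeping (the $(N^2)!$ factor, $N=cn^2+2$, polynomial-size unary domain, polynomially many oracle calls via \Cref{lemma:c2+cc}) is consistent with how those lemmas are established.
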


\begin{proof}
It follows naturally from \Cref{lemma:grid-2lo,lemma:grid-tm}.
\end{proof}

As pointed out in \cite{WFOMC-polys}, a linear order relation $L$ can be encoded in the following way: We first attach an acyclicity axiom to a binary relation $R$ and require $G(R)$ to be a tournament, and then let $L$ be the copy of $R$ but make it reflexive. Formally, a linear order relation $L$ can be encoded by the following sentence:
\begin{equation*}
  \begin{aligned}
    & \acyclicityaxiom(R) \land \forall x \forall y \bigl( (x \neq y) \to (R(x,y) \lor R(y,x)) \bigl) \\
    \land & \forall x \forall y \bigl( (x \neq y) \to (L(x,y) \leftrightarrow R(x,y)) \bigl) \\
    \land & \forall x \ L(x,x).
  \end{aligned}
\end{equation*}

The equality symbol can be eliminated by the trick in \cite[Lemma 3.5]{WFOMC-FO3} and thus we obtain the hardness for \FOtwo{} with two acyclic relations.

\begin{corollary}
There is an \FOtwo{} sentence with two acyclic relations whose WFOMC is \class{\#P_1}-hard to compute.
\end{corollary}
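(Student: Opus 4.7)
The plan is to reduce WFOMC for \FOtwo{} with two linear order relations, which is \class{\#P_1}-hard by \Cref{thm:2lo}, to WFOMC for \FOtwo{} with two acyclic relations, using precisely the encoding sketched just before the corollary. Starting from the hard sentence $\sentence'$ with linear orders on $L_1$ and $L_2$, I would introduce two fresh binary relations $R_1, R_2$ and rewrite each axiom $\loaxiom(L_i)$ as the conjunction of $\acyclicityaxiom(R_i)$, a tournament requirement $\forall x \forall y ((x \neq y) \to (R_i(x,y) \lor R_i(y,x)))$, the correspondence $\forall x \forall y ((x \neq y) \to (L_i(x,y) \leftrightarrow R_i(x,y)))$, and reflexivity $\forall x \ L_i(x,x)$.

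Next I would verify that this rewriting yields a weight-preserving bijection between models. Given any model of $\sentence' \land \loaxiom(L_1) \land \loaxiom(L_2)$, each $R_i$ is forced to equal the strict (irreflexive) part of $L_i$, so the extension is unique; conversely, acyclicity together with the tournament constraint forces $R_i$ to be an acyclic tournament whose reflexive closure is a linear order, and the correspondence axioms then pin down $L_i$ uniquely. Extending the original weighting functions $\weight, \negweight$ by $\weight'(R_i) = \negweight'(R_i) = 1$ for $i \in \{1,2\}$ makes the contribution of the new $R_i$ literals equal to $1$ in every model, so WFOMC is preserved under the bijection and hence inherits the \class{\#P_1}-hardness.

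The resulting sentence still uses the equality symbol, which is not part of the \FOtwo{} fragment as formulated in the preliminaries. I would therefore invoke the equality-elimination procedure of \cite[Lemma 3.5]{WFOMC-FO3} to obtain a WFOMC-equivalent \FOtwo{} sentence without equality. The only conceivable obstacle is confirming that this step composes cleanly with the acyclicity axioms attached to $R_1, R_2$, but this is immediate because the reduction operates purely on the first-order part of the sentence and does not disturb the graphs $G(R_1), G(R_2)$ on which acyclicity is imposed.
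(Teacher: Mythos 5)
Your proposal is correct and follows essentially the same route as the paper: encode each linear order $L_i$ via an acyclicity axiom on a fresh relation $R_i$ constrained to be a tournament whose reflexive copy is $L_i$, give the fresh relations trivial weights, and eliminate equality by \cite[Lemma 3.5]{WFOMC-FO3}, inheriting hardness from \Cref{thm:2lo}. You merely spell out the weight-preserving bijection that the paper leaves implicit.
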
 

\section{Domain-Liftability of A Linear Order Axiom and A Successor Axiom}

In addition to the hardness of WFOMC of \FOtwo{} with two linear order relations, we show that weakening the power of one of the two linear order axioms by restricting access to only its successor relation results in domain-liftability. We give an algorithm which computes WFOMC of \FOtwo{} and \Ctwo{} sentences with a linear order relation (possibly with its successor relation) and another successor relation in time polynomial in the domain size.

\subsection{Warmup: Framework of Computing WFOMC for \FOtwo{}}\label{sec:framework}

We briefly introduce the existing framework of computing WFOMC for \FOtwo{} sentences and important concepts proposed in \cite{WFOMC-FO2,WFOMC-linearorder-axiom} which are used in our algorithm.

A set of literals is \emph{maximally consistent} if it does not contain both a literal and its negation, and cannot be extended to a larger consistent set.

\begin{definition}[1-type]
A \emph{1-type} of a first-order sentence $\sentence$ is a maximally consistent set of literals formed from relations in $\sentence$ where each literal uses only a single variable $x$.
\end{definition}

\begin{definition}[2-table]
A \emph{2-table} of a first-order sentence $\sentence$ is a maximally consistent set of literals formed from relations in $\sentence$ where each literal uses two variables $x, y$.
\end{definition}


For example, $\sentence = \forall x \forall y \left( F(x,y) \to G(y) \right)$ has four 1-types: $F(x,x) \land G(x)$, $F(x,x) \land \lnot G(x)$, $\lnot F(x,x) \land G(x)$ and $\lnot F(x,x) \land \lnot G(x)$, and four 2-tables: $F(x,y) \land F(y,x)$, $F(x,y) \land \lnot F(y,x)$, $\lnot F(x,y) \land F(y,x)$ and $\lnot F(x,y) \land \lnot F(y,x)$. Intuitively, a 1-type interprets unary and reflexive binary relations for a single domain element, and a 2-table interprets binary relations for a pair of distinct domain elements.

Given a WFOMC problem $\wfomc(\sentence, n, \weight, \negweight)$ for an \FOtwo{} sentence $\sentence$, the Skolemization trick in \cite{WFOMC-FO2} produces a universally quantified \FOtwo{} sentence $\sentence'$ of size $O(|\sentence|)$ and a pair of weighting functions $\weight', \negweight'$ such that $\wfomc(\sentence, n, \weight, \negweight) = \wfomc(\sentence', n, \weight', \negweight')$. Specifically, $\sentence'$ is in the form $\forall x \forall y \ \psi(x,y)$ where $\psi(x,y)$ is a quantifier-free \FOtwo{} sentence.
Therefore, $\sentence'$ can be expanded as the conjunction of ground formulae over the domain $[n]$:
\begin{equation*}
  \begin{aligned}
    \sentence' = \left( \bigwedge_{i=1}^n \psi(i,i) \right) \land \left( \bigwedge_{1 \le i < j \le n} \psi(i,j) \land \psi(j,i) \right).
  \end{aligned}
\end{equation*}

Let $C = \{C_1, C_2, \cdots, C_u\}$ be the set of possible 1-types of $\sentence'$ and $D$ be the set of possible 2-tables. Suppose that the 1-type of element $i$ is determined as $C_{\tau_i}$ ($\tau_i \in [u]$). After substituting unary and reflexive binary literals in $\psi(i,j) \land \psi(j,i)$ with true or false according to $C_{\tau_i}$ and $C_{\tau_j}$, the formula for a pair of elements $(i,j)$ does not have common ground literals with any other pair, hence the 2-tables between each pair of elements can be selected independently. Moreover, the formula only depends on $C_{\tau_i}$ and $C_{\tau_j}$. Therefore, the WFOMC can be computed as follows:
\begin{equation}\label{eq:basicwfomc}
  \begin{aligned}
    & \wfomc(\sentence', n, \weight', \negweight')
    = \sum_{\tau_1, \cdots, \tau_n \in [u]} \ \prod_{i=1}^n W(C_{\tau_i}) \prod_{1 \le i < j \le n} r_{\tau_i,\tau_j},
  \end{aligned}
\end{equation}
where
\begin{equation*}
  r_{s,t} = \sum_{\substack{\pi \in D, \\ C_{s}(a) \land C_{t}(b) \land \pi(a,b) \models \psi(a,b) \land \psi(b,a)}} W(\pi).
\end{equation*}

We follow the idea from \cite{WFOMC-linearorder-axiom} that computes \Cref{eq:basicwfomc} recursively. Define
\begin{equation}\label{eq:f-def}
  f(m, \tau_1, \cdots, \tau_m) = \prod_{i=1}^m W(C_{\tau_i}) \prod_{1 \le i < j \le m} r_{\tau_i,\tau_j}
\end{equation}
for each $m \in [n]$ and $\tau_1, \cdots, \tau_m \in [u]$. Then for $m \ge 2$, it holds that
\begin{equation}\label{eq:f-recursion}
  \begin{aligned}
  & f(m, \tau_1, \cdots, \tau_m)
  = f(m-1, \tau_1, \cdots, \tau_{m-1}) \cdot W(C_{\tau_m}) \cdot \prod_{i=1}^{m-1} r_{\tau_i,\tau_{m}},
  \end{aligned}
\end{equation}
and the WFOMC can be obtained by
\begin{equation}\label{eq:f-answer}
  \wfomc(\sentence', n, \weight', \negweight') = \sum_{\tau_1, \cdots, \tau_n \in [u]} f(n,\tau_1, \cdots, \tau_n).
\end{equation}

The computation can be further adapted to an algorithm that runs in time polynomial in $n$ but we stop here as it suffices to point out the critical idea of recursion.

\subsection{The Algorithm}

Now we describe a new algorithm for WFOMC for \FOtwo{} sentences with a linear order relation and a successor relation (of another linear order).

\begin{theorem}\label{thm:lo+succ}
An \FOtwo{} sentence $\sentence$ with a linear order relation and another successor relation is domain-liftable.
\end{theorem}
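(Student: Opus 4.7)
The plan is to extend the recursion of \Cref{sec:framework} by folding in a dynamic program that tracks the partial structure of the successor relation $S$. After Skolemizing $\sentence$ as in \cite{WFOMC-FO2} to reduce to a universally quantified \FOtwo{} sentence $\forall x \forall y\, \psi(x,y)$, use $\loaxiom(L)$ to canonically enumerate the domain as $e_1 < e_2 < \cdots < e_n$ and process elements in this order following \cite{WFOMC-linearorder-axiom}. For each $L$-ordered pair $(e_i, e_j)$ with $i < j$, split the 2-table weight $r_{s,t}$ of \Cref{sec:framework} into three contributions $r^{\emptyset}_{s,t}$, $r^{\rightarrow}_{s,t}$, $r^{\leftarrow}_{s,t}$ according to whether the 2-table has no $S$-literal between the pair, the literal $S(e_i, e_j)$, or the literal $S(e_j, e_i)$; 2-tables containing both $S$-literals are discarded as incompatible with $\succaxiom(S)$.

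The axiom $\succaxiom(S)$ forces $G(S)$ to be a directed Hamiltonian path, so its restriction to any prefix of processed elements is a disjoint union of directed simple paths. The proposed DP state at step $m$ consists of the counts $(n_s)_{s \in [u]}$ of processed elements per 1-type together with, for each ordered pair of 1-types $(s_h, s_t)$, the number $p(s_h, s_t)$ of currently formed paths whose head is of 1-type $s_h$ and whose tail is of 1-type $s_t$ (a singleton path contributes to $p(s, s)$ with its own 1-type). Every counter is bounded by $n$, so the number of reachable states is $n^{O(u^2)}$, polynomial in the domain size for a fixed sentence.

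The transition from step $m$ to step $m+1$ enumerates the 1-type $\tau$ of $e_{m+1}$ together with one of four patterns for its $S$-edges to prior elements: (i) no $S$-edge, spawning a new singleton path; (ii) a single incoming edge from the head of some path, extending that path at its head end; (iii) a single outgoing edge to the tail of some path, extending it at its tail end; (iv) an incoming edge from the head of one path and an outgoing edge to the tail of another, merging two distinct paths. In each case the multiplier is $W(C_\tau)$ times the product of the appropriate $r^{\bullet}_{\cdot, \tau}$ factors, determined by the counts $(n_s)$ and the 1-types of the consumed endpoints, times a combinatorial count of legal endpoint choices; the update to $p(\cdot, \cdot)$ is deterministic, e.g.\ in case (iv) a merge of a path of type $(s_h, s')$ with a path of type $(s_2, s_t)$ yields one path of type $(s_2, s')$. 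The final answer $\wfomc(\sentence, n, \weight, \negweight)$ is the sum over DP values at $m = n$ across states with a single remaining path, i.e.\ $\sum_{s_h, s_t} p(s_h, s_t) = 1$; extension to \Ctwo{} with cardinality constraints then follows via \Cref{lemma:c2+cc}.

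The principal obstacle is case (iv): naively picking a head--tail pair of the desired 1-types may accidentally select the two endpoints of the same path, closing a cycle and violating $\succaxiom(S)$. This is the reason for carrying the full two-dimensional array $p(s_h, s_t)$ rather than only the head/tail marginals: with $p(\cdot, \cdot)$ in hand, the number of legal different-path pairs of head type $s_h$ and tail type $s_t$ is $h_{s_h} t_{s_t} - p(s_h, s_t)$ (where $h_{s_h} = \sum_{s_t} p(s_h, s_t)$ and $t_{s_t} = \sum_{s_h} p(s_h, s_t)$), and further splitting case (iv) by the non-consumed endpoint types $(s', s_2)$ of the two merged paths allows the state update to be performed without overcounting.
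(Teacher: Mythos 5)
Your proposal follows essentially the same route as the paper's proof: Skolemize, fix the $L$-order and process elements along it, split the 2-table weights by the $S$-relationship of each $L$-ordered pair, and run a dynamic program whose state is the vector of 1-type counts plus a $u\times u$ table of $S$-path (segment) counts indexed by the 1-types of their two endpoints, with transitions singleton/extend-at-either-end/merge and the same-path (cycle-creating) danger in merges resolved by enumerating the full endpoint types of both merged paths, which is exactly the paper's $\eta$ bookkeeping. The only slip is in the final formula: since WFOMC sums over all $n!$ interpretations of $L$, the single-path DP total computed for the fixed canonical order must still be multiplied by $n!$ (as the paper does), or else you are only counting the models in which $L$ is the natural order.
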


\begin{proof}
We obtain $\sentence', \weight', \negweight'$ by the transformation mentioned in the previous subsection and then our task is to compute $\wfomc(\sentence' \land \loaxiom(L) \land \succaxiom(S), n, \weight', \negweight')$ where $L, S$ are distinguished binary relations. Again, let $C = \{C_1, C_2, \cdots, C_u\}$ be the set of possible 1-types of $\sentence'$, $D$ be the set of possible 2-tables and $C_{\tau_i}$ ($\tau_i \in [u]$) be the 1-type of element $i$.

Each of the two axioms implies an order of elements. We can, without loss of generality, fix the order of $L$ to be $1 \to 2 \to 3 \to \cdots \to n$. For any other possible order of $L$, applying a permutation on the indices of elements maps the models to those for the fixed $L$ order while preserving the weight. This fact is formally stated by \cite[Theorem 1]{WFOMC-linearorder-axiom}.

\noindentparagraph{Computing WFOMC for a Fixed $S$-Order}
As a warm-up, we also fix the order of $S$ as a permutation of $[n]$. We compute the WFOMC recursively following the idea of \Cref{eq:f-def,eq:f-recursion,eq:f-answer}. For each $s,t \in [u]$ and $k \in \{1,2,3\}$, define
\begin{equation*}
  r_{s,t,k} = \sum_{\substack{\pi \in D, \\ C_{s}(a) \land C_{t}(b) \land \pi(a,b) \models \psi(a,b) \land \psi(b,a) \land \phi_k(a,b)}} W(\pi),
\end{equation*}
where
\begin{equation*}
  \begin{aligned}
    \phi_1(a,b) &= L(a,b) \land \lnot L(b,a) \land \lnot S(a,b) \land \lnot S(b,a), \\
    \phi_2(a,b) &= L(a,b) \land \lnot L(b,a) \land S(a,b) \land \lnot S(b,a), \\
    \phi_3(a,b) &= L(a,b) \land \lnot L(b,a) \land \lnot S(a,b) \land S(b,a).
  \end{aligned}
\end{equation*}

The term $r_{s,t,k}$ is a refinement of $r_{\tau_i,\tau_j}$ in \Cref{eq:f-def} since $L$ and $S$ are totally interpreted and therefore $r_{\tau_i,\tau_j}$ can only take one of the three values from $\{r_{\tau_i,\tau_j,1}, r_{\tau_i,\tau_j,2}, r_{\tau_i,\tau_j,3}\}$. Note that we always compute $r_{\tau_i, \tau_j}$ for those $i<j$ and we fix the order of $L$ as the natural order, hence $\phi_k(a,b)$ does not involve the cases in which $\lnot L(a,b) \land L(b,a)$ holds.

The function $f$ in \Cref{eq:f-def} can be adapted as follows:
\begin{equation}\label{eq:g-def}
  g(m, \tau_1, \cdots, \tau_m) = \prod_{i=1}^m W(C_{\tau_i}) \prod_{1 \le i < j \le m} r_{\tau_i,\tau_j,\kappa_{i,j}},
\end{equation}
where each $\kappa_{i,j}$ takes the only value from $\{1,2,3\}$ such that $\phi_{\kappa_{i,j}}(i,j)$ is satisfied. We then obtain the recursive computation for $g$:
\begin{equation}\label{eq:g-recursion}
  \begin{aligned}
  & g(m, \tau_1, \cdots, \tau_m)
  = g(m-1, \tau_1, \cdots, \tau_{m-1}) \cdot W(C_{\tau_m}) \cdot \prod_{i=1}^{m-1} r_{\tau_i,\tau_{m},\kappa_{i,m}}.
  \end{aligned}
\end{equation}

\noindentparagraph{Computing WFOMC for All $S$-Orders}
The computation above has two main drawbacks: it computes WFOMC for a fixed order of $S$, and the concrete 1-type sequence $\tau_1, \cdots, \tau_m$ is used to determine the weights. These two issues might lead to $n! \cdot u^n$ possibilities of $g$-values to be computed. The key to addressing these issues is to figure out how the terms $r_{\tau_i,\tau_{m},\kappa_{i,m}}$ can appear in \Cref{eq:g-recursion}.
This relies on the following observation.

Imagine that we place the elements $1, 2, \cdots, n$ in a line in order from left to right. The order implied by $S$ is like a string of the $n$ elements. If we focus on the first $m$ elements, ignoring other elements, the string breaks to several \emph{segments}. Formally, a segment is a maximal sequence of elements $x_1, x_2, \cdots, x_k$ such that every element in the sequence is within the first $m$ elements of the order of $L$, and $S(x_1, x_2) \land \cdots \land S(x_{k-1}, x_k)$ holds. The element $x_1$ is called the head of the segment, and $x_k$ is called the tail of the segment. An example is shown in \Cref{fig:segments} where we consider the order of $S$ as $5 \to 2 \to 8 \to 1 \to 6 \to 4 \to 7 \to 10 \to 3 \to 9$. Looking at only the prefix $\{1,2,\cdots,6\}$, there are three segments $5 \to 2$, $1 \to 6 \to 4$ and $3$.

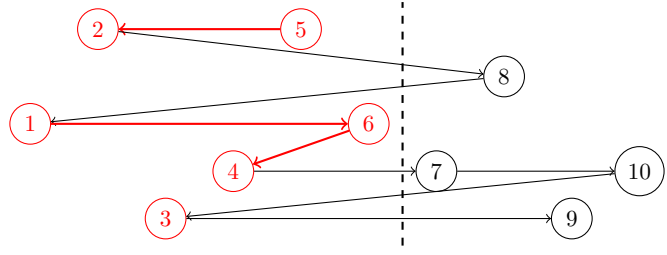
\begin{figure}
  \centering
      \begin{tikzpicture}[scale=0.9]
        \tikzstyle{vertex}=[circle,draw, scale=0.8, minimum size=3mm]
        \tikzstyle{redvertex}=[circle,draw,red, scale=0.8, minimum size=3mm]
        \node[redvertex] (5) at (5,2.8) {$5$};
        \node[redvertex] (2) at (2,2.8) {$2$};
        \node[vertex] (8) at (8,2.1) {$8$};
        \node[redvertex] (1) at (1,1.4) {$1$};
        \node[redvertex] (6) at (6,1.4) {$6$};
        \node[redvertex] (4) at (4,0.7) {$4$};
        \node[vertex] (7) at (7,0.7) {$7$};
        \node[vertex] (10) at (10,0.7) {$10$};
        \node[redvertex] (3) at (3,0) {$3$};
        \node[vertex] (9) at (9,0) {$9$};
        \path[->,red,thick] (5) edge (2);
        \path[->] (2) edge (8);
        \path[->] (8) edge (1);
        \path[->,red,thick] (1) edge (6);
        \path[->,red,thick] (6) edge (4);
        \path[->] (4) edge (7);
        \path[->] (7) edge (10);
        \path[->] (10) edge (3);
        \path[->] (3) edge (9);
        \draw[dashed,thick] (6.5,-0.4)--(6.5,3.2);
      \end{tikzpicture}
  \caption{An illustration of the segments (red nodes and red arrows) of the order of $S$ ($5 \to 2 \to 8 \to 1 \to 6 \to 4 \to 7 \to 10 \to 3 \to 9$) on a prefix of elements $\{1, 2, \cdots, 6\}$.}\label{fig:segments}
\end{figure}


Denote a segment with the head having 1-type $C_s$ and the tail having 1-type $C_t$ by $\segment{s}{t}$. Let $\bar \otc = (\bar \otce_1, \cdots, \bar \otce_u)$ be the vector of length $u$ where $\bar \otce_i$ is the number of elements $k \in [1,m-1]$ such that $\tau_k = i$. Observe that the value of $\lambda = \prod_{i=1}^{m-1} r_{\tau_i,\tau_{m},\kappa_{i,m}}$ can be classified to the following four types depending on the connection between $m$ and the former $m-1$ elements with respect to $S$:

\begin{itemize}
  \item $m$ merges two segments $\segment{a}{b}$ and $\segment{c}{d}$ by linking the tail of $\segment{a}{b}$ to $m$ and linking $m$ to the head of $\segment{c}{d}$. If $b \neq c$, then $\lambda$ takes the value
  \begin{equation*}
    \begin{aligned}
    \lambda_{\text{merge1}}(b,c,\tau_m,\bar \otc) =\ & r_{b,\tau_m,2} \cdot r_{c,\tau_m,3} \cdot \left(r_{b,\tau_m,1}\right)^{\bar \otce_b-1}
     \cdot \left(r_{c,\tau_m,1}\right)^{\bar \otce_c-1} \cdot \prod_{\substack{s \in [u],\\ s \neq b, \\ s \neq c}} \left(r_{s,\tau_m,1}\right)^{\bar \otce_s}.
    \end{aligned}
  \end{equation*}
  If $b=c$ (i.e., if the 1-types of the tail of the first segment and the head of the second segment are the same), then $\lambda$ takes the value
  \begin{equation*}
    \begin{aligned}
    \lambda_{\text{merge2}}(b,\tau_m,\bar \otc) =\ & r_{b,\tau_m,2} \cdot r_{b,\tau_m,3}
     \cdot \left(r_{b,\tau_m,1}\right)^{\bar \otce_b-2}
    \cdot \prod_{\substack{s \in [u],\\ s \neq b}} \left(r_{s,\tau_m,1}\right)^{\bar \otce_s}.
    \end{aligned}
  \end{equation*}
  \item $m$ extends a segment $\segment{a}{b}$ serving as a new head. In this case, $\lambda$ takes the value
  \begin{equation*}
    \lambda_{\text{head}}(a,\tau_m,\bar \otc) =\ r_{a,\tau_m,3} \cdot \left(r_{a,\tau_m,1}\right)^{\bar \otce_a-1} \cdot \prod_{\substack{s \in [u],\\ s \neq a}} \left(r_{s,\tau_m,1}\right)^{\bar \otce_s}.
  \end{equation*}
  \item $m$ extends a segment $\segment{a}{b}$ serving as a new tail. In this case, $\lambda$ takes the value
  \begin{equation*}
    \lambda_{\text{tail}}(b,\tau_m,\bar \otc) =\ r_{b,\tau_m,2} \cdot \left(r_{b,\tau_m,1}\right)^{\bar \otce_b-1} \cdot \prod_{\substack{s \in [u],\\ s \neq b}} \left(r_{s,\tau_m,1}\right)^{\bar \otce_s}.
  \end{equation*}
  \item $m$ creates a new segment containing itself only. In this case, $\lambda$ takes the value
  \begin{equation*}
    \lambda_{\text{only}}(\tau_m,\bar \otc) = \prod_{s \in [u]} \left(r_{s,\tau_m,1}\right)^{\bar \otce_s}.
  \end{equation*}
\end{itemize}

The computation of $\lambda$ suggests that we no longer need to care about the concrete order of $S$ and the 1-type sequence $\tau_1, \cdots, \tau_m$, but rather $\bar \otc$, $\tau_m$, the behavior of $m$ with respect to $S$, and the number of heads and tails of segments assigned to each 1-type. Define $\otc$ similarly as $\bar \otc$ but within elements $1, \cdots, m$, and let $\segc = (\segce_{1,1}, \cdots, \segce_{u,u})$ be the vector of length $u^2$ where $\segce_{s,t}$ is the number of segments $\segment{s}{t}$ within elements $1, \cdots, m$. Now we can adapt \Cref{eq:g-def} by grouping orders of $S$ and the possible 1-type sequences $\tau_1, \cdots, \tau_m$ that have the same $\otc$ and $\segc$. Define $h(m,\otc, \segc)$ as
\begin{equation*}
  h(m, \otc, \segc) = \sum_{\substack{\tau_1, \cdots, \tau_m \text{ satisfying } \otc,\\ \text{segments of } 1, \cdots, m \\ \text{satisfying } \segc}} \  \prod_{i=1}^m W(C_{\tau_i}) \prod_{1 \le i < j \le m} r_{\tau_i,\tau_j,\kappa_{i,j}}.
\end{equation*}

Note that by the analysis above, the value of $\prod_{1 \le i < j \le m} r_{\tau_i,\tau_j,\kappa_{i,j}}$ is also unique given $\tau_1, \cdots, \tau_m$ and the segments of $1, \cdots, m$.

The value of $h$ for $m$ elements can be computed from $h$ for $m-1$ elements by enumerating the 1-type $C_{\tau}$ of $m$ and the behavior of $m$ with respect to $S$.
Denote by $\otc^{-\tau}$ the vector of size $u$ such that
\begin{equation*}
  \otce^{-\tau}_{s} = \begin{cases}
    \otce_{s}-1, & s=\tau, \\
    \otce_{s}, & \mbox{otherwise.}
  \end{cases}
\end{equation*}

Similarly, $\segc^{-(a,b)}$ denotes the vector of size $u^2$ such that
\begin{equation*}
  \segce^{-(a,b)}_{s,t} = \begin{cases}
    \segce_{s,t} - 1, & s=a,\ t=b, \\
    \segce_{s,t}, & \mbox{otherwise,}
  \end{cases}
\end{equation*}
and $\segc^{+(a,b)}$ is defined similarly. The operation can be nested, e.g., $\segc^{-(a,b)+(c,d)}$ refers to $\left(\segc^{-(a,b)}\right)^{+(c,d)}$.

We compute $h(m,\otc,\segc)$ given the 1-type $C_{\tau}$ of $m$ and the behavior $\beta\in \{\text{merge1},\text{merge2},\text{head},\text{tail},\text{only}\}$ of $m$, denoted by $h(m,\otc,\segc | \tau,\beta)$:
\begin{itemize}
  \item $\beta=\text{merge1}$: $m$ merges two segments $\segment{a}{b}$ and $\segment{c}{d}$ by linking the tail of $\segment{a}{b}$ to $m$ and linking $m$ to the head of $\segment{c}{d}$ for some $a,b,c,d \in [u]$ and $b \neq c$. Elements $1, \cdots, m-1$ have 1-types consistent with $\otc^{-\tau}$ and segments consistent with $\segc^{\text{merge1}} = \segc^{-(a,d)+(a,b)+(c,d)}$. Then,
      \begin{equation*}
        \begin{aligned}
          & h(m,\otc,\segc | \tau,\text{merge1})
           = \sum_{\substack{a,b,c,d \in [u],\ b \neq c, \\ \segc^{\text{merge1}} \ge 0}} \Bigl(h(m-1, \otc^{-\tau}, \segc^{\text{merge1}}) \cdot \eta \cdot W(C_{\tau}) \cdot \lambda_{\text{merge1}}(b,c,\tau,\otc^{-\tau})\Bigl),
        \end{aligned}
      \end{equation*}
      where
      \begin{equation*}
        \eta = \begin{cases}
          \segce^{\text{merge1}}_{a,b} \cdot \segce^{\text{merge1}}_{c,d}, & a \neq c \text{ or } b \neq d, \\
          \segce^{\text{merge1}}_{a,b}\left(\segce^{\text{merge1}}_{a,b}-1\right), & \mbox{otherwise.}
        \end{cases}
      \end{equation*}
  \item $\beta=\text{merge2}$: $m$ merges two segments $\segment{a}{b}$ and $\segment{b}{d}$ by linking the tail of $\segment{a}{b}$ to $m$ and linking $m$ to the head of $\segment{b}{d}$ for some $a,b,d \in [u]$. Elements $1, \cdots, m-1$ have 1-types consistent with $\otc^{-\tau}$ and segments consistent with $\segc^{\text{merge2}} = \segc^{-(a,d)+(a,b)+(b,d)}$. Then,
      \begin{equation*}
        \begin{aligned}
          & h(m,\otc,\segc | \tau,\text{merge2})
           = \sum_{a,b,d \in [u],\ \segc^{\text{merge2}} \ge 0} \Bigl(h(m-1, \otc^{-\tau}, \segc^{\text{merge2}}) \cdot \eta \cdot W(C_{\tau}) \cdot \lambda_{\text{merge2}}(b,\tau,\otc^{-\tau})\Bigl),
        \end{aligned}
      \end{equation*}
      where
      \begin{equation*}
        \eta = \begin{cases}
          \segce^{\text{merge2}}_{a,b} \cdot \segce^{\text{merge2}}_{b,d}, & a \neq b \text{ or } b \neq d, \\
          \segce^{\text{merge2}}_{a,b}\left(\segce^{\text{merge2}}_{a,b}-1\right), & \mbox{otherwise.}
        \end{cases}
      \end{equation*}
  \item $\beta=\text{head}$: $m$ extends a segment $\segment{a}{b}$ serving as a new head for some $a,b \in [u]$. Elements $1, \cdots, m-1$ have 1-types consistent with $\otc^{-\tau}$ and segments consistent with $\segc^{\text{head}} = \segc^{-(\tau,b)+(a,b)}$. Then,
      \begin{equation*}
        \begin{aligned}
          h(m,\otc,\segc | \tau,\text{head})
          =&  \sum_{a,b \in [u],\ \segc^{\text{head}} \ge 0} \Bigl(h(m-1, \otc^{-\tau}, \segc^{\text{head}}) \cdot \segce^{\text{head}}_{a,b} \cdot W(C_{\tau}) \cdot \lambda_{\text{head}}(a,\tau,\otc^{-\tau})\Bigl).
        \end{aligned}
      \end{equation*}
  \item $\beta=\text{tail}$: $m$ extends a segment $\segment{a}{b}$ serving as a new tail for some $a,b \in [u]$. Elements $1, \cdots, m-1$ have 1-types consistent with $\otc^{-\tau}$ and segments consistent with $\segc^{\text{tail}} = \segc^{-(a,\tau)+(a,b)}$. Then,
      \begin{equation*}
        \begin{aligned}
          h(m,\otc,\segc | \tau,\text{tail})
          =&  \sum_{a,b \in [u],\ \segc^{\text{tail}} \ge 0} \Bigl(h(m-1, \otc^{-\tau}, \segc^{\text{tail}}) \cdot \segce^{\text{tail}}_{a,b} \cdot W(C_{\tau}) \cdot \lambda_{\text{tail}}(b,\tau,\otc^{-\tau})\Bigl).
        \end{aligned}
      \end{equation*}
  \item $\beta=\text{only}$: $m$ creates a new segment containing itself only. Elements $1, \cdots, m-1$ have 1-types consistent with $\otc^{-\tau}$ and segments consistent with $\segc^{\text{only}} = \segc^{-(\tau,\tau)}$. Then,
      \begin{equation*}
        \begin{aligned}
          &h(m,\otc,\segc | \tau,\text{only})
          =  h(m-1, \otc^{-\tau}, \segc^{\text{only}}) \cdot W(C_{\tau}) \cdot \lambda_{\text{only}}(\tau,\otc^{-\tau}).
        \end{aligned}
      \end{equation*}
\end{itemize}

Summing the above terms up, we have
\begin{equation*}
  \begin{aligned}
    h(m, \otc, \segc)
    = \sum_{\substack{\tau \in [u],\\ \otce_{\tau}>0}} \ \sum_{\substack{\beta\in \{\text{merge1},\text{merge2},\\ \text{head},\text{tail},\text{only}\}}} h(m,\otc,\segc | \tau,\beta).
  \end{aligned}
\end{equation*}

Finally, the WFOMC for the fixed order of $L$ can be obtained when each of the $n$ elements is assigned some 1-type and there are only $1$ segment:
\begin{equation*}
  \gamma = \sum_{|\otc| = n} \sum_{|\segc| = 1} h(n, \otc, \segc).
\end{equation*}

As mentioned at the beginning of the proof, every other order of $L$ has the same weight as the one with the natural order of numbers. Therefore,
\begin{equation*}
  \wfomc(\sentence' \land \loaxiom(L) \land \succaxiom(S), n, \weight', \negweight') = n! \cdot \gamma.
\end{equation*}

The algorithm is summarized in \Cref{algo:2succ}.

\begin{algorithm}
\caption{WFOMC for \FOtwo{}+$\loaxiom$+$\succaxiom$} \label{algo:2succ}
\KwIn{An \FOtwo{} sentence $\sentence$ with distinguished binary relations $L, S$, weighting functions $\weight, \negweight$, and an integer $n$}
\KwOut{$\wfomc(\sentence \land \loaxiom(L) \land \succaxiom(S), n, \weight, \negweight)$}

$h_{1, \otc,\segc} \gets W(\tau) \textbf{ for each } \tau \in [u], \ \otce_{\tau} = \segce_{\tau,\tau} = |\otc| = |\segc| = 1$ \\
\For{$m \gets 2$ \KwTo $n$} {
    \ForEach{\upshape $\otc, \segc$ \textbf{ such that } $|\otc|=m$} {
        $h(m, \otc, \segc) \gets 0$ \\
        \ForEach{\upshape $\tau \in [u]$ \textbf{ such that } $\otce_{\tau>0}$} {
            \ForEach{\upshape $\beta\in \{\text{merge1},\text{merge2},\text{head},\text{tail},\text{only}\}$} {
                compute $h(m,\otc,\segc | \tau,\beta)$ \\
                $h(m, \otc, \segc) \gets h(m, \otc, \segc) + h(m,\otc,\segc | \tau,\beta)$
            }
        }
    }
}
$\gamma = \sum_{|\otc| = n} \sum_{|\segc| = 1} h(n,\otc,\segc)$ \\
\Return $n! \cdot \gamma$
\end{algorithm}

As the number of $h$-values we need to compute is $n \cdot O(n^u) \cdot O(n^{u^2}) = O(n^{O(u^2)})$ and each computation takes time polynomial in $u$, the algorithm runs in time $O(n^{O(u^2)})$, thus the fragment is domain-liftable.
\end{proof}

The result can be extended to the fragment of \Ctwo{} with cardinality constraints by \Cref{lemma:c2+cc}.

\begin{corollary}
\Ctwo{} with a linear order relation and another successor relation (possibly along with cardinality constraints) is domain-liftable for WFOMC.
\end{corollary}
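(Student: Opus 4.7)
The plan is to obtain the corollary as a direct composition of \Cref{thm:lo+succ} with \Cref{lemma:c2+cc}. Concretely, given a \Ctwo{} sentence $\sentence$ together with some cardinality constraints and the axioms $A = \loaxiom(L) \land \succaxiom(S)$ on two distinguished binary relations, I would first invoke \Cref{lemma:c2+cc} on $\sentence$ and $A$ to produce an \FOtwo{} sentence $\sentence'$ with no counting quantifiers and no cardinality constraints such that for every domain size $n$ the computation of $\wfomc(\sentence \land A, n, \weight, \negweight)$ reduces in time polynomial in $n$ to calls of $\wfomc(\sentence' \land A, n, \weight'', \negweight'')$ for suitable weights.

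Next, I would apply \Cref{thm:lo+succ} to each such call. Since $\sentence' \land \loaxiom(L) \land \succaxiom(S)$ is precisely an \FOtwo{} sentence augmented with a linear order relation and another successor relation, \Cref{thm:lo+succ} gives a polynomial-time algorithm (namely \Cref{algo:2succ}) to evaluate $\wfomc(\sentence' \land A, n, \weight'', \negweight'')$. Composing the polynomial-time reduction from \Cref{lemma:c2+cc} with this polynomial-time algorithm yields an overall polynomial-time procedure, which is exactly what is meant by domain-liftability.

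The only point that needs a brief sanity check is that \Cref{lemma:c2+cc} preserves the pair of axioms on the distinguished relations $L$ and $S$. This is built into the statement of the lemma: the reduction applies to an arbitrary conjunction of axioms $A$ and leaves $A$ untouched, introducing only fresh auxiliary unary and binary relations that are not subject to any axiom. Consequently the output sentence still has exactly one linear order relation and one successor relation, matching the hypothesis of \Cref{thm:lo+succ}. There is no real obstacle here; the corollary is essentially a plug-in of the two previous results, and the main thing to record in the proof is that no axioms other than $\loaxiom(L)$ and $\succaxiom(S)$ are imposed by the elimination procedure.
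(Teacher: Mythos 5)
Your proposal is correct and matches the paper's argument exactly: the corollary is obtained by composing the reduction of \Cref{lemma:c2+cc} (which eliminates counting quantifiers and cardinality constraints while leaving the axioms $\loaxiom(L) \land \succaxiom(S)$ untouched) with the polynomial-time algorithm of \Cref{thm:lo+succ}. Your added remark that the elimination introduces only fresh relations not subject to any axiom is a reasonable sanity check that the paper leaves implicit.
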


Using \Cref{eq:2lo-s}, one can derive the successor relation of a linear order relation by a simple \Ctwo{} sentence, therefore the linear order relation in \Cref{thm:2succ} can be along with its successor relation.

\begin{corollary}\label{thm:2succ}
\Ctwo{} with a linear order relation, its successor relation and another successor relation (possibly along with cardinality constraints) is domain-liftable for WFOMC.
\end{corollary}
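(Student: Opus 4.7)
The plan is to reduce this statement to the preceding corollary, which already handles \Ctwo{} sentences (with cardinality constraints) over a single linear order relation $L$ and an unrelated successor relation $S$. Denote by $L$, $S_L$, and $S$ the three distinguished binary relations in the new fragment, where $S_L$ is intended to be the successor of $L$. Given any input sentence $\sentence$ in this fragment together with cardinality constraints, I would construct a new sentence $\sentence'$ that no longer treats $S_L$ as a separately axiomatised relation but instead defines it internally from $L$. Concretely, I would reuse the \Ctwo{} encoding of the successor of a linear order from \Cref{eq:2lo-first,eq:2lo-last,eq:2lo-s} applied to the pair $(L, S_L)$: introduce fresh unary relations $First_L, Last_L$ characterised by the usual universal formulas in $L$, and then conjoin
\begin{equation*}
  \forall x (\lnot Last_L(x) \to \exists_{=1} y\, S_L(x,y)) \land \forall x (\lnot First_L(x) \to \exists_{=1} y\, S_L(y,x)) \land \forall x \forall y (S_L(x,y) \to L(x,y)).
\end{equation*}
The resulting $\sentence'$ is still a \Ctwo{} sentence (possibly with cardinality constraints inherited from $\sentence$) whose only axiomatised distinguished binary relations are $L$ and $S$.

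Next I would verify that the reduction preserves WFOMC. In any model of $\loaxiom(L)$, the added conjuncts pin $S_L$ down as exactly the successor relation of $L$, so the interpretation of $S_L$ is uniquely determined by that of $L$. Hence every model of $\sentence \land \loaxiom(L) \land \succaxiom(S)$ in which $S_L$ denotes the genuine successor of $L$ is in weight-preserving bijection with a model of $\sentence' \land \loaxiom(L) \land \succaxiom(S)$, yielding
\begin{equation*}
  \wfomc(\sentence \land \loaxiom(L) \land \succaxiom(S), n, \weight, \negweight) = \wfomc(\sentence' \land \loaxiom(L) \land \succaxiom(S), n, \weight, \negweight).
\end{equation*}
The right-hand side is computable in time polynomial in $n$ by the preceding corollary, which itself is obtained by combining \Cref{thm:lo+succ} with \Cref{lemma:c2+cc}.

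The main ``obstacle'' is really only bookkeeping: one must check that $S_L$-literal weights are correctly accounted for (they are, because $S_L$ is uniquely determined by $L$ and its literal weights simply multiply in according to the standard WFOMC definition), and that adding the defining conjuncts keeps us inside \Ctwo{} with cardinality constraints (it does, since $\exists_{=1}$ is a counting quantifier and no new cardinality constraints are introduced). Beyond that, no genuinely new algorithmic idea is required; the heavy lifting has already been done by the dynamic program of \Cref{thm:lo+succ}.
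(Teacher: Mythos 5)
Your argument is essentially the paper's own proof: the paper likewise derives the successor of $L$ via the \Ctwo{} encoding of \Cref{eq:2lo-s} (applied to the pair $(L,S_L)$) and then invokes the preceding corollary, i.e.\ \Cref{thm:lo+succ} combined with \Cref{lemma:c2+cc}. The only caveat—shared equally by the paper's one-line proof—is that \Cref{eq:2lo-s} as literally written does not quite pin $S_L$ down uniquely, since $L$ is reflexive and so spurious self-loop interpretations such as $S_L(x,x)$ at the first and last elements satisfy all three conjuncts; conjoining $\forall x\,\lnot S_L(x,x)$ (or requiring $S_L(x,y)\to\lnot L(y,x)$) is a trivial fix within \Ctwo{} that makes your ``uniquely determined'' claim, and hence the weight-preserving bijection, literally correct.
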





\section{Performance Measurements}

We implemented the Algorithm \ref{algo:2succ} in Python\footnote{The implementation can be found in \url{https://github.com/kulavacl/WFOMC}}.
Although this implementation is not optimized for performance, it provides valuable insight into how the algorithm scales with the size of the problem.

\subsection{Selected Sentences}

We first selected four sentences with 1, 2, 4, and 5 valid 1-types\footnote{A 1-type is \emph{valid} if it is satisfiable in some model of the sentence. It is a natural idea to remove unsatisfiable 1-types in the implementation.}, respectively, to evaluate the performance of Algorithm \ref{algo:2succ}. Since the complexity of our algorithm is exponential in the number of valid 1-types of the sentence and polynomial in the size of the domain, benchmarking the algorithm on formulas with different number of valid 1-types helps us better understand its scalability.
All the four sentences contain distinguished binary relations $L$ and $S$. The linear order axiom will be applied to $L$ and the successor axiom will be applied to $S$, that is, $\loaxiom(L) \land \succaxiom(S)$ will be added to the sentences implicitly.

\begin{itemize}
  \item $\Phi_1$ is a sentence that has a single valid 1-type:
    \begin{align*}
        \Phi_1 = \forall x \forall y \Bigl( & (B(x, y) \to S(x, y)) \\
        & \wedge \bigl((S(x, y) \wedge L(x, y)) \to B(x, y)\bigl)\Bigl).
    \end{align*}
  \item $\Phi_2$ is modification of $\Phi_1$ with 2 valid 1-types:
    \begin{align*}
        \Phi_2 = \forall x \forall y \bigl((S(x, y) \wedge L(x, y)) \to (U(x) \leftrightarrow \lnot U(y))\bigl).
    \end{align*}
  \item $\Phi_4$ is a sentence with 4 valid 1-types:
    \begin{align*}
        \Phi_4 = \forall x \forall y \Bigl( & \bigl((U_1(x) \wedge L(x, y)) \to U_1(y)\bigl) \\
        & \land \bigl((U_1(x) \wedge S(x, y)) \to U_2(y)\bigl) \Bigl).
    \end{align*}
  \item $\Phi_5$ is an extension of $\Phi_4$ which has 5 valid 1-types:
    \begin{align*}
        \Phi_5 =  \Phi_4 \land \forall x \forall y \bigl(B(x, y) \to (U_1(x) \wedge U_2(y))\bigl).
    \end{align*}
\end{itemize}

We modified the algorithm to only consider one fixed linear order of $L$ rather than computing the result over all possible linear orders. This allows us to match the sequences generated by WFOMC of $\Phi_1$ to the existing sequence A000670 in OEIS\footnote{http://oeis.org/}, and $\Phi_2$ to A000629, which will be proved later. There is no matching from $\Phi_4$ or $\Phi_5$ to any sequence in OEIS but we still use them since the number of valid 1-types is the deciding factor in the scalability of our algorithm.


Another sentence on which we evaluate the performance is the formalization of a real-world scenario. Suppose there are $n$ train stations in a single line, where $L(x, y)$ signifies that station $x$ is to the west of station $y$. A train is to stop at each station exactly once, in an order specified by a successor relation $S$ of all stations. We also require that the train starts in the west-most station and ends its journey in the east-most station. Reverting the direction of the train is costly, so we limit the number of times the train has to turn around to at most $m$.

The scenario can be formalized by the following sentence $\Phi_{train}$. Each element in the domain represents a station. The westmost station and the eastmost station are labelled by the unary relation $First$ and $Last$, respectively. The unary relation $W2E(x)$ being true indicates that the train goes to $x$ from west to east, and being false indicates the reversed direction. The unary relation $RevertAt(x)$ indicates whether the train reverts at the station $x$. Specifically, we let $W2E(x)$ be true if $x$ is the westmost station, and let $RevertAt(x)$ be false if $x$ is the eastmost station.

\begin{align*}
    \Phi_{train} = & \ \ \ \ \forall x (First(x) \leftrightarrow \lnot \exists y \ S(y, x)) \\
    & \land \forall x (Last(x) \leftrightarrow \lnot \exists y \ S(x, y)) \\
    & \land \forall x (First(x) \to W2E(x)) \\
    & \land \forall x (Last(x) \to \lnot RevertAt(x)) \\
    & \wedge \forall x \forall y \bigl(S(x,y) \to (W2E(y) \leftrightarrow L(x,y))\bigl) \\
    & \wedge \forall x \forall y \Bigl(S(x,y) \to \bigl(RevertAt(x) \leftrightarrow (W2E(x) \leftrightarrow \lnot W2E(y))\bigl)\Bigl) \\
    & \land |RevertAt| \leq m.
\end{align*}

In our experiment, we chose $m = 2$, for which, if the linear order is fixed, the sequence of counting the number of models corresponds to sequence A111277 in OEIS.

We remark that the nature of our algorithm allows us to obtain the first and last element in the linear order efficiently.
This means that while $\Phi_{train}$ has 6 valid 1-types, the main computation only needs to iterate over 4 of them.

\subsection{Performance}

We compared the runtimes of Algorithm \ref{algo:2succ} to runtimes of weighted model counters d4\footnote{https://github.com/crillab/d4} and GANAK \cite{ganak}.
For formulas $\Phi_2$, $\Phi_4$, and $\Phi_5$, we stopped the execution once a single instance exceeded one hour of computation for both our algorithm and the model counters.
For formula $\Phi_1$, the model counters were stopped once a single instance of the problem took 600 seconds to solve, while our algorithm was limited to domain size 2000, which is enough to showcase its much better performance for this formula.

All measurements were taken on a laptop with an Intel Core i7-10510U CPU with 8 cores, each running at 1.8 GHz and 32GB of RAM.

The resulting runtimes can be found in Figure \ref{fig:execution}. In all cases, the logarithm of the running time exhibits a bend as the domain size $n$ grows. This behavior provides evidence that the runtime of our algorithm is indeed in polynomial of $n$. Moreover, our algorithm scales better than weighted model counters when the domain size is large.

\begin{figure}
  \centering
  \begin{subfigure}[t]{.32\textwidth}
    \centering
    \includegraphics[width=\linewidth]{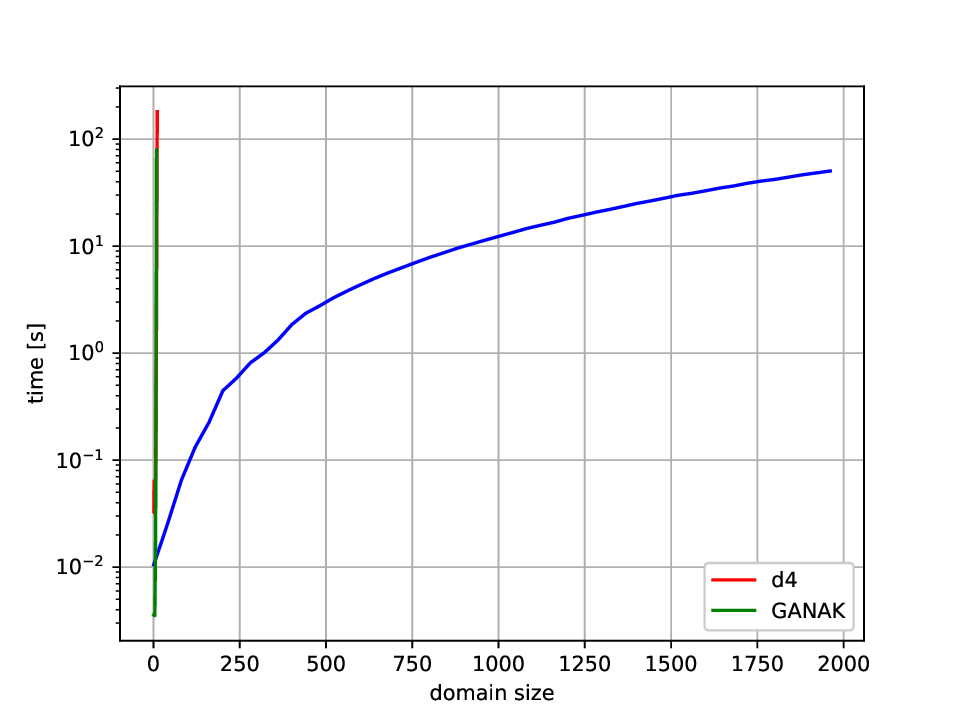}
    \caption{Runtime for $\Phi_1$}
  \end{subfigure}
  \begin{subfigure}[t]{.32\textwidth}
    \centering
    \includegraphics[width=\linewidth]{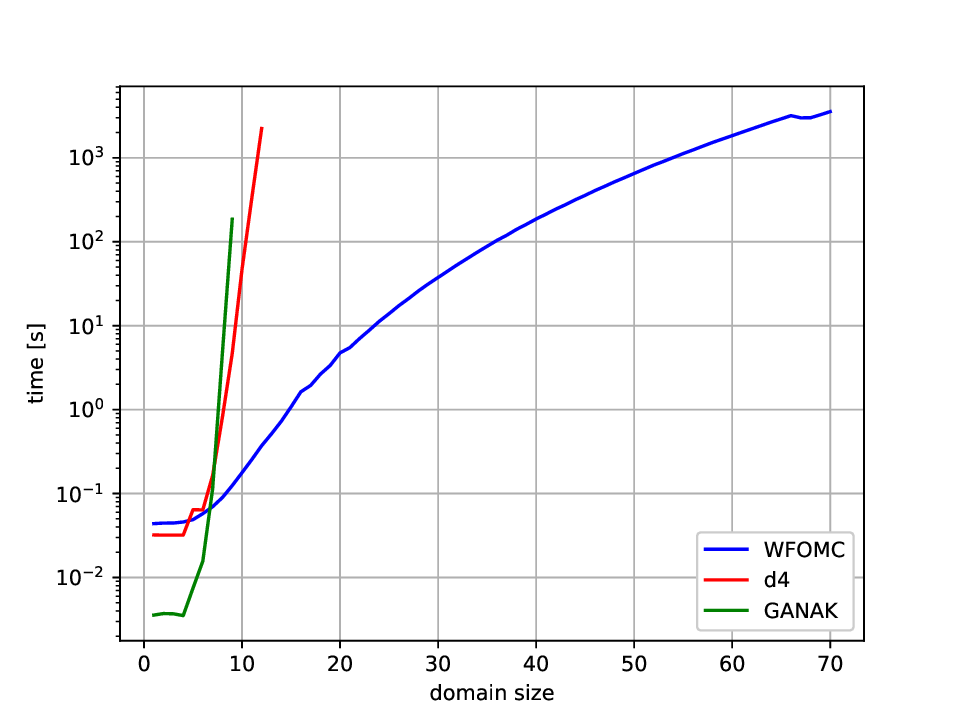}
    \caption{Runtime for $\Phi_2$}
  \end{subfigure}
  \begin{subfigure}[t]{.32\textwidth}
    \centering
    \includegraphics[width=\linewidth]{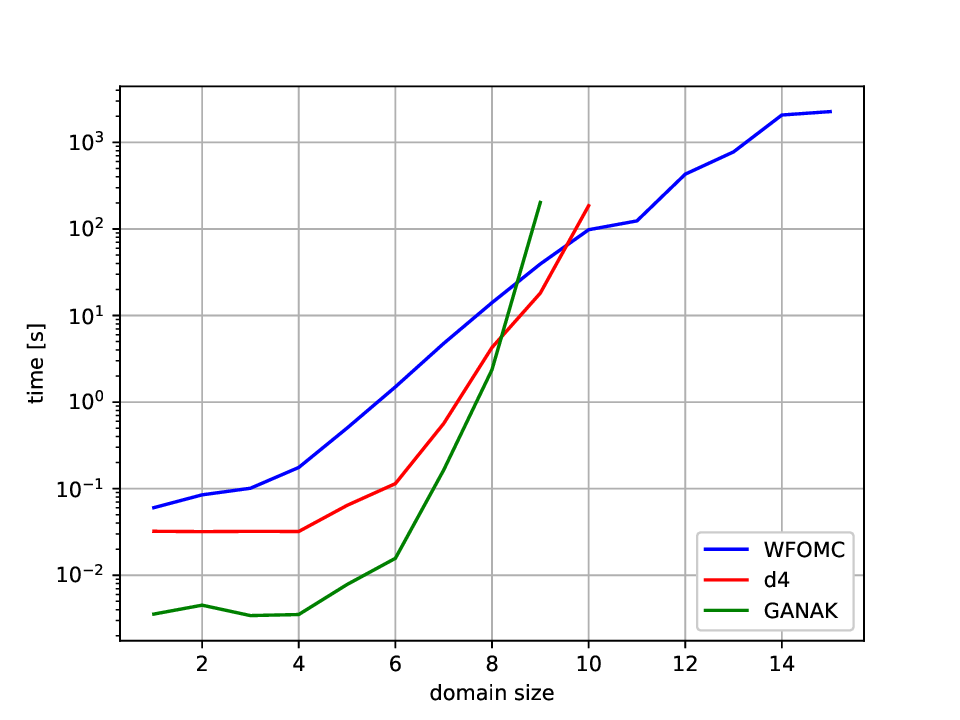}
    \caption{Runtime for $\Phi_4$}
  \end{subfigure}
  \begin{subfigure}[t]{.32\textwidth}
    \centering
    \includegraphics[width=\linewidth]{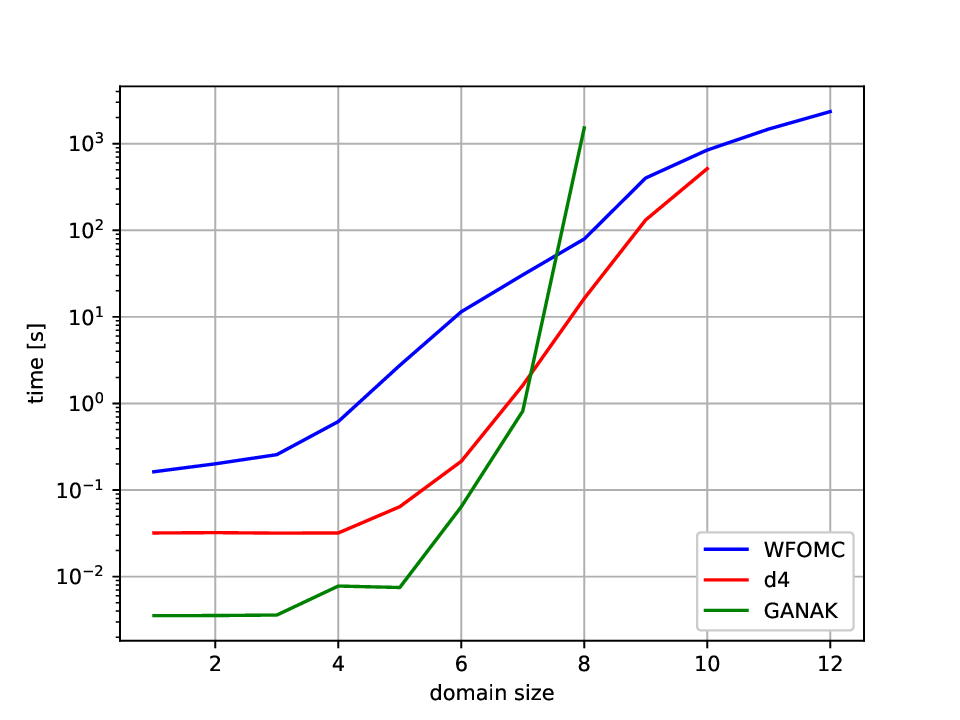}
    \caption{Runtime for $\Phi_5$}
  \end{subfigure}
  \begin{subfigure}[t]{.32\textwidth}
    \centering
    \includegraphics[width=\linewidth]{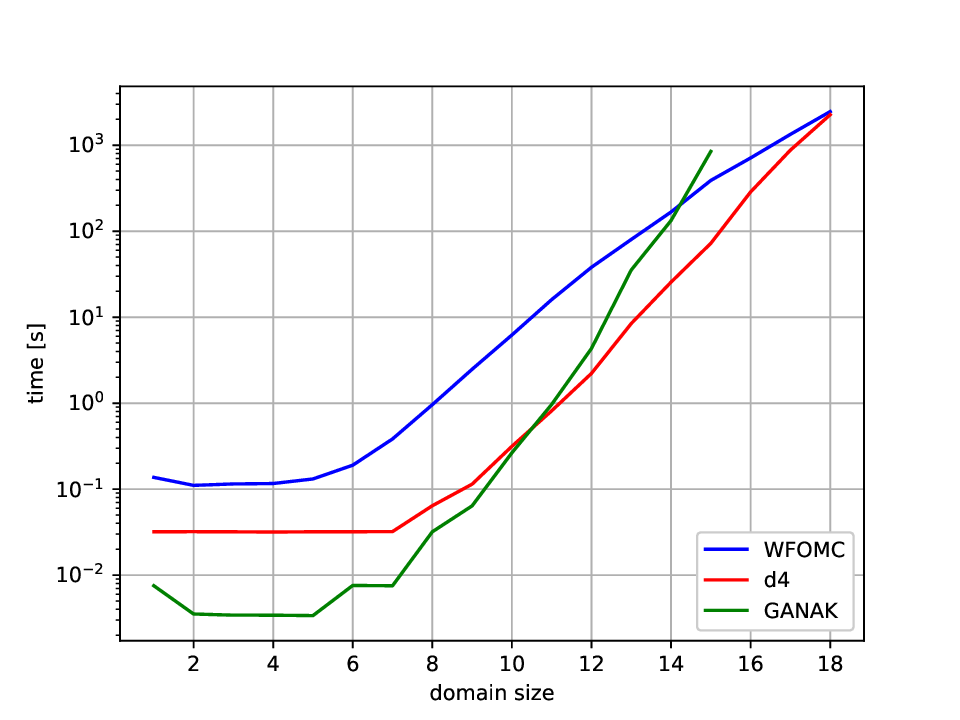}
    \caption{Runtime for $\Phi_{train}$}
  \end{subfigure}
  \caption{Execution times}
  \label{fig:execution}
\end{figure}

\subsection{Proofs of Matchings to OEIS}

\begin{lemma}
$\fomc(\Phi_1, n)$ equals to the $n$-th term of A000670, which is the number of ways to partition $n$ elements to disjoint subsets and arrange them into a sequence.
\end{lemma}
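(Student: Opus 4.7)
The plan is to directly enumerate models. Since the measurement implementation fixes $L$ to a single linear order, we may assume without loss of generality that $L$ encodes the natural order $1 < 2 < \cdots < n$ (reflexively closed). Every interpretation of $S$ satisfying $\succaxiom(S)$ is the successor relation of some linear order, so it corresponds to a unique permutation $\pi = (a_1, a_2, \ldots, a_n)$ of $[n]$ whose positive $S$-literals are exactly $\{S(a_i, a_{i+1}) : 1 \le i < n\}$. This gives $n!$ choices of $S$, and a model is then fully specified by choosing an interpretation of $B$.

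For a fixed $\pi$, I would read off how $\Phi_1$ constrains $B$. The implication $B(x, y) \to S(x, y)$ forces every positive $B$-literal to sit on a successor edge, so in particular $B(x, x)$ is always false. The implication $(S(x, y) \wedge L(x, y)) \to B(x, y)$ then makes $B(a_i, a_{i+1})$ true whenever $a_i < a_{i+1}$ (an ascent of $\pi$), while at descent positions ($a_i > a_{i+1}$) the value of $B(a_i, a_{i+1})$ is unrestricted. The number of valid $B$-interpretations compatible with $\pi$ is therefore $2^{\mathrm{des}(\pi)}$, where $\mathrm{des}(\pi)$ denotes the number of descents of $\pi$, giving
\begin{equation*}
    \fomc(\Phi_1, n) = \sum_{\pi \in S_n} 2^{\mathrm{des}(\pi)}.
\end{equation*}

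To identify this sum with the $n$-th term of A000670, I would construct an explicit bijection between pairs $(\pi, T)$ with $T$ a subset of the descent positions of $\pi$ and ordered set partitions of $[n]$. Given such a pair, cut $\pi$ at every position in $\{1, \ldots, n-1\} \setminus T$; because all internal positions of each resulting block lie in $T$, every block is a strictly decreasing sequence, and the underlying sets of the blocks (in order) form an ordered set partition. Conversely, given an ordered set partition $(B_1, \ldots, B_k)$, concatenate each $B_j$ written in decreasing order to produce $\pi$, and take $T$ to be the set of within-block positions; these are automatically descents of the resulting permutation. The main obstacle is verifying that these maps are mutually inverse; the subtlety is that positions at block boundaries may be either ascents or descents of $\pi$ and must land outside $T$ in both directions, but once this bookkeeping is checked the counts coincide and the lemma follows.
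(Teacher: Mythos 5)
Your proof is correct and takes essentially the same route as the paper's: fix the linear order $L$, read $S$ as a permutation, note that $\Phi_1$ forces $B$ on ascending successor pairs and leaves it free on descending ones, and biject with ordered set partitions via writing each block in decreasing order. Your write-up is just a more explicit version of the paper's brief sketch (making the count $\sum_{\pi}2^{\mathrm{des}(\pi)}$ and the cut-position bijection precise), so there is nothing to add.
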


\begin{proof}
The sequence can be expanded to a permutation in the way that for each subset, we write down the numbers in descending order. For $\Phi_1$ when fixing the linear order $L$, the successor relation $S$ represents a permutation of domain elements. The binary predicate $B(x,y)$ indicates whether two adjacent elements $x$ and $y$ belong to the same subset. According to the expansion, if $L(x,y)$, they must belong to different subsets. But if $L(y,x)$, both cases are allowed. Therefore, there is a bijective mapping from the models of $\Phi_2$ to arrangements of partitions.
\end{proof}

\begin{lemma}
$\fomc(\Phi_2, n)$ equals to the $n$-th term of A000629, which is the number of ways to partition $n+1$ elements to disjoint subsets and arrange them into a necklace.
\end{lemma}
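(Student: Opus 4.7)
The plan is to adapt the bijective reasoning used for $\Phi_1$ together with the Fubini-type recurrence underlying A000629. First, with $L$ fixed to the natural order, $S$ encodes a permutation $\sigma$ of $[n]$, and the clause of $\Phi_2$ forces $U(\sigma_i) \neq U(\sigma_{i+1})$ whenever $(\sigma_i, \sigma_{i+1})$ is an ascent of $\sigma$ while imposing no constraint at descents. Within each maximal ascending run of $\sigma$ the $U$-values therefore alternate and are determined by the colour of the run's first element; since a permutation with $d(\sigma)$ descents has $d(\sigma)+1$ ascending runs, the number of valid $U$-assignments per $\sigma$ equals $2^{d(\sigma)+1}$. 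Summing over $\sigma$,
\begin{equation*}
  \fomc(\Phi_2, n) = \sum_{\sigma} 2^{d(\sigma)+1} = 2 \sum_{\sigma} 2^{d(\sigma)}.
\end{equation*}

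Next, the previous lemma's bijection identifies the models of $\Phi_1$ with ordered set partitions of $[n]$, and inspecting that bijection shows that each fixed $\sigma$ contributes exactly $2^{d(\sigma)}$ such partitions (one per choice of ``cuts'' among the descents of $\sigma$). Writing $F_n$ for the $n$-th term of A000670, this yields $\sum_\sigma 2^{d(\sigma)} = F_n$ and hence $\fomc(\Phi_2, n) = 2 F_n$. I would then identify $2F_n$ with the number of necklace partitions of $[n+1]$ by cutting each necklace at the block containing the distinguished element $n+1$: writing that block as $\{n+1\} \cup A$ with $A \subseteq [n]$ and listing the remaining blocks in the cyclic order that follows gives a bijection between necklaces of $[n+1]$ and pairs $(A, Q)$ where $Q$ is an ordered partition of $[n] \setminus A$. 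The total count is $\sum_{k=0}^{n} \binom{n}{k} F_{n-k}$, and the Fubini recurrence $F_n = \sum_{k=1}^{n} \binom{n}{k} F_{n-k}$ (obtained by peeling off the first block of an ordered partition) collapses this sum to $F_n + F_n = 2F_n$, completing the equality.

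If a fully bijective proof is preferred, one can compose these two steps directly: given a $\Phi_2$ model $(\sigma, U)$, feed $(\sigma, B)$ with $B(\sigma_i, \sigma_{i+1}) := U(\sigma_i) \oplus U(\sigma_{i+1})$ through the $\Phi_1$ bijection to obtain an ordered partition $P$ of $[n]$, and form the necklace by inserting $n+1$ as a new singleton block before $P$ if $U(\sigma_1) = 0$, and by adjoining $n+1$ to the first block of $P$ otherwise. The main obstacle is essentially bookkeeping: one must verify that this map is well defined modulo cyclic rotation of the necklace, and that splitting each necklace at the block containing $n+1$ recovers the correct permutation, $U$-assignment, and bit. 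With the $\Phi_1$ bijection in hand this reduces to a short case analysis on the value of $U(\sigma_1)$ and on whether $n+1$ ends up as a singleton in the resulting necklace.
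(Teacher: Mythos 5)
Your proof is correct, but it takes a different route from the paper. The paper's own argument is a direct bijection: prepend a new element $n+1$ with $U$ fixed to true, declare two $S$-adjacent elements to lie in the same block iff they share a $U$-colour (so blocks are exactly the maximal monochromatic, hence descending, runs), and then close up the resulting ordered partition of $[n+1]$ into a necklace by cutting/joining at the block containing $n+1$; no intermediate counting identity is needed. You instead count both sides separately: per permutation $\sigma$ the models of $\Phi_2$ number $2^{d(\sigma)+1}$, the $\Phi_1$ lemma gives $\sum_\sigma 2^{d(\sigma)} = F_n$ (the Fubini number), and the necklaces of $[n+1]$ are counted as $\sum_{k=0}^{n}\binom{n}{k}F_{n-k} = 2F_n$ via the cut-at-the-block-of-$n+1$ decomposition and the Fubini recurrence, so both sides equal $2F_n$. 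Your double-count is more explicit and arguably easier to verify than the paper's terse one-line bijection (which leaves implicit why monochromatic runs are descending and why the map is invertible), and it makes the identity $\text{A000629}(n) = 2\cdot\text{A000670}(n)$ transparent; what it gives up is the explicit bijection itself, which you only sketch in your final paragraph and correctly flag as unverified bookkeeping — that sketch is not needed for the lemma, so the main argument stands on its own. One small presentational point: like the paper, you should state explicitly that $\fomc(\Phi_2,n)$ here refers to the count with the linear order $L$ fixed (otherwise an extra factor $n!$ appears), which both you and the paper assume implicitly from the surrounding discussion.
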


\begin{proof}
One can always break the necklace to a sequence by picking the subset containing element $n+1$ and concatenate the subsets in clockwise. In $\Phi_2$ we color each element by two colors: $U(x)$ being true and $U(x)$ being false. One can construct the bijective mapping from the models of $\Phi_2$ to arrangements of partitions by requiring that two adjacent elements are in the same subset if and only if they have the same color. Since $n+1$ is always the first element in the permutation, we fix its color as $U(x)=\top$, omit it and count the colorings in the remaining $n$ elements.
\end{proof} 

\section{Discussions}


The first interesting question from this work is whether there is a two-variable fragment with axioms on multiple relations whose FOMC is \class{\#P_1}-hard. Since FOMC of a first order logic sentence itself is a \class{\#P_1} problem, proving such hardness helps to establish more complete problems in the class \class{\#P_1}. The reductions used in our proof of hardness might not be applicable for FOMC as we require the usage of counting quantifiers and cardinality constraints which forces the reduction to WFOMC.

This work is an exploration of the boundary of WFOMC for \FOtwo{} sentences with multiple axioms. As only limited axioms are discussed in this paper, the complexity boundary for other interesting axioms remains unknown, especially the symmetric axioms such as the equivalence axiom, the connectedness axiom and the tree axiom. Ignoring our positive result, a natural attempt to prove the hardness of two such axioms is to force the relations to be successors of some linear orders and try to encode a grid by successors. Unfortunately, \Cref{thm:2succ} negates this idea and therefore encodings of other structures by these axioms should be established.

It can also be seen that our results coincide with the results in the decidability of finite satisfiability. For example, the work \cite{fs-FO2+2successor} proved that \FOtwo{} with two successor relations are decidable while \FOtwo{} with two linear order relations and their successor relations are undecidable. In our paper, it is also shown that \FOtwo{} with only two successor axioms are domain-liftable for WFOMC while strengthening both successor axioms to linear order axioms leads to \class{\#P_1}-hardness. Though the techniques for hardness proofs in the two problems are not identical, essentially both are trying to encode a grid and further a hard problem on the grid (e.g., the tiling problem) by the first-order logic fragments. An open question arises naturally whether there is a unified way to transform the undecidability of finite satisfiability problem to \class{\#P_1}-hardness of WFOMC, or conversely, the domain-liftablity to decidability.

\section*{Acknowledgments} V\'{a}clav K\r{u}la's work was supported by the CELSA project ``Towards Scalable Algorithms for Neuro-Symbolic AI''. Ond\v{r}ej Ku\v{z}elka's work was supported by the Czech Science Foundation project ``The Automatic Combinatorialist'' (24-11820S). 



\bibliographystyle{alpha}
\bibliography{ref}

\newcommand{\etalchar}[1]{$^{#1}$}
\begin{thebibliography}{BVdBGS15}

\bibitem[AB09]{computational-complexity}
Sanjeev Arora and Boaz Barak.
\newblock {\em Computational Complexity - {A} Modern Approach}.
\newblock Cambridge University Press, 2009.

\bibitem[Boo74]{tally}
Ronald~V. Book.
\newblock Tally languages and complexity classes.
\newblock {\em Inf. Control.}, 26(2):186--193, 1974.

\bibitem[BVdBGS15]{WFOMC-FO3}
Paul Beame, Guy Van~den Broeck, Eric Gribkoff, and Dan Suciu.
\newblock Symmetric weighted first-order model counting.
\newblock In {\em {PODS}}, pages 313--328. {ACM}, 2015.

\bibitem[CW15]{fs-C2+lo}
Witold Charatonik and Piotr Witkowski.
\newblock Two-variable logic with counting and a linear order.
\newblock In {\em {CSL}}, volume~41 of {\em LIPIcs}, pages 631--647. Schloss
  Dagstuhl - Leibniz-Zentrum f{\"{u}}r Informatik, 2015.

\bibitem[GKV97]{fs-FO2}
Erich Gr{\"{a}}del, Phokion~G. Kolaitis, and Moshe~Y. Vardi.
\newblock On the decision problem for two-variable first-order logic.
\newblock {\em Bull. Symb. Log.}, 3(1):53--69, 1997.

\bibitem[GT07]{getoor2007introduction}
Lise Getoor and Ben Taskar.
\newblock {\em Introduction to statistical relational learning}, volume~1.
\newblock MIT press Cambridge, 2007.

\bibitem[Kie11]{fs-FO2+lo}
Emanuel Kieronski.
\newblock Decidability issues for two-variable logics with several linear
  orders.
\newblock In {\em {CSL}}, volume~12 of {\em LIPIcs}, pages 337--351. Schloss
  Dagstuhl - Leibniz-Zentrum f{\"{u}}r Informatik, 2011.

\bibitem[KKWW24]{WFOMC-polys}
Qipeng Kuang, Ondrej Kuzelka, Yuanhong Wang, and Yuyi Wang.
\newblock Bridging weighted first order model counting and graph polynomials.
\newblock {\em CoRR}, abs/2407.11877, 2024.

\bibitem[KL18]{DBLP:conf/lics/KuusistoL18}
Antti Kuusisto and Carsten Lutz.
\newblock Weighted model counting beyond two-variable logic.
\newblock In {\em {LICS}}, pages 619--628. {ACM}, 2018.

\bibitem[KMPT12]{fs-FO2+2eq}
Emanuel Kieronski, Jakub Michaliszyn, Ian Pratt{-}Hartmann, and Lidia Tendera.
\newblock Two-variable first-order logic with equivalence closure.
\newblock In {\em {LICS}}, pages 431--440. {IEEE} Computer Society, 2012.

\bibitem[KO05]{fs-FO2+3eq}
Emanuel Kieronski and Martin Otto.
\newblock Small substructures and decidability issues for first-order logic
  with two variables.
\newblock In {\em {LICS}}, pages 448--457. {IEEE} Computer Society, 2005.

\bibitem[Kuz21]{WFOMC-C2}
Ondrej Kuzelka.
\newblock Weighted first-order model counting in the two-variable fragment with
  counting quantifiers.
\newblock {\em J. Artif. Intell. Res.}, 70:1281--1307, 2021.

\bibitem[Lew78]{tiling-bounded}
Harry~R. Lewis.
\newblock Complexity of solvable cases of the decision problem for the
  predicate calculus.
\newblock In {\em {FOCS}}, pages 35--47. {IEEE} Computer Society, 1978.

\bibitem[LP98]{tiling-NPC}
Harry~R. Lewis and Christos~H. Papadimitriou.
\newblock {\em Elements of the theory of computation, 2nd Edition}.
\newblock Prentice Hall, 1998.

\bibitem[Man10]{fs-FO2+2successor}
Amaldev Manuel.
\newblock Two variables and two successors.
\newblock In {\em {MFCS}}, volume 6281 of {\em Lecture Notes in Computer
  Science}, pages 513--524. Springer, 2010.

\bibitem[MBS23]{WFOMC-axioms}
Sagar Malhotra, Davide Bizzaro, and Luciano Serafini.
\newblock Lifted inference beyond first-order logic.
\newblock {\em CoRR}, abs/2308.11738, 2023.

\bibitem[MW62]{fs-FO3}
Edward~F Moore and Hao Wang.
\newblock Entscheidungsproblem reduced to the $\forall \exists \forall$ case.
\newblock {\em Proceedings of the National Academy of Sciences},
  48(3):365--377, 1962.

\bibitem[Ott01]{fs-FO2+8order}
Martin Otto.
\newblock Two variable first-order logic over ordered domains.
\newblock {\em J. Symb. Log.}, 66(2):685--702, 2001.

\bibitem[Poo03]{poole2003first}
David Poole.
\newblock First-order probabilistic inference.
\newblock In {\em {IJCAI}}, pages 985--991. Morgan Kaufmann, 2003.

\bibitem[SJT{\etalchar{+}}23]{fluffy}
Martin Svatos, Peter Jung, Jan T{\'{o}}th, Yuyi Wang, and Ondrej Kuzelka.
\newblock On discovering interesting combinatorial integer sequences.
\newblock In {\em {IJCAI}}, pages 3338--3346. ijcai.org, 2023.

\bibitem[SRSM19]{ganak}
Shubham Sharma, Subhajit Roy, Mate Soos, and Kuldeep~S. Meel.
\newblock {GANAK:} {A} scalable probabilistic exact model counter.
\newblock In {\em {IJCAI}}, pages 1169--1176. ijcai.org, 2019.

\bibitem[SZ12]{fs-FO2+2order}
Thomas Schwentick and Thomas Zeume.
\newblock Two-variable logic with two order relations.
\newblock {\em Log. Methods Comput. Sci.}, 8(1), 2012.

\bibitem[TK23]{WFOMC-linearorder-axiom}
Jan T{\'{o}}th and Ondřej Kuželka.
\newblock Lifted inference with linear order axiom.
\newblock In {\em {AAAI}}, pages 12295--12304. {AAAI} Press, 2023.

\bibitem[Val79]{counting-complexity}
Leslie~G. Valiant.
\newblock The complexity of enumeration and reliability problems.
\newblock {\em {SIAM} J. Comput.}, 8(3):410--421, 1979.

\bibitem[Var82]{datacomplexity}
Moshe~Y. Vardi.
\newblock The complexity of relational query languages (extended abstract).
\newblock In {\em {STOC}}, pages 137--146. {ACM}, 1982.

\bibitem[vBK23]{WFOMC-tree-axioms}
Timothy van Bremen and Ondrej Kuzelka.
\newblock Lifted inference with tree axioms.
\newblock {\em Artif. Intell.}, 324:103997, 2023.

\bibitem[VdB11]{WFOMC-UFO2}
Guy Van~den Broeck.
\newblock On the completeness of first-order knowledge compilation for lifted
  probabilistic inference.
\newblock In {\em {NIPS}}, pages 1386--1394, 2011.

\bibitem[VdB13]{VandenBroeck13}
Guy Van~den Broeck.
\newblock {\em Lifted Inference and Learning in Statistical Relational Models
  (Eerste-orde inferentie en leren in statistische relationele modellen)}.
\newblock PhD thesis, Katholieke Universiteit Leuven, Belgium, 2013.

\bibitem[VdBMD14]{WFOMC-FO2}
Guy Van~den Broeck, Wannes Meert, and Adnan Darwiche.
\newblock Skolemization for weighted first-order model counting.
\newblock In {\em {KR}}. {AAAI} Press, 2014.

\bibitem[Wan61]{wang-tile}
Hao Wang.
\newblock Proving theorems by pattern recognition—ii.
\newblock {\em Bell system technical journal}, 40(1):1--41, 1961.

\end{thebibliography}

\end{document}